\newtheorem{theorem}{Theorem}
\newtheorem{proposition}[theorem]{Proposition}
\numberwithin{equation}{section}
\theoremstyle{definition}
\newtheorem{definition}[theorem]{Definition}
\theoremstyle{remark}
\numberwithin{theorem}{section}
\numberwithin{proposition}{section}
\numberwithin{lemma}{section}
\numberwithin{corollary}{section}
\numberwithin{definition}{section}
\numberwithin{remark}{section}
\numberwithin{example}{section}
\newcommand{\be}{\begin{equation}}
	\newcommand{\en}{\end{equation}}
\newcommand{\ben}{\begin{equation*}}
	\newcommand{\enn}{\end{equation*}}
\newcommand{\bea}{\begin{eqnarray}}
	\newcommand{\ena}{\end{eqnarray}}
\begin{document}
	
	\newlength\tindent
	\setlength{\tindent}{\parindent}
	\setlength{\parindent}{0pt}
	\renewcommand{\indent}{\hspace*{\tindent}}
	
	\begin{savenotes}
		\title{
			\bf{ 
				Is (independent) subordination relevant\\ in option pricing?
			
		}}
		\author{
			Michele Azzone$^*$ \& 
			Roberto Baviera$^*$ 
		}
		
		\maketitle
		
		\vspace*{0.11truein}
		\begin{tabular}{ll}
			$*$ & Politecnico di Milano, Department of Mathematics, Italy.\\
			
		\end{tabular}
	\end{savenotes}
	
	\vspace*{0.11truein}
	\begin{abstract}
		\noindent
	  \citet{monroe1978processes} demonstrates that any local semimartingale can be represented as a time-changed Brownian Motion (BM).  A natural question arises: does this representation theorem hold when the BM and the time-change are independent? We prove that a local semimartingale is not equivalent to a BM with a time-change that is independent from the BM.

	  Our result is obtained utilizing   a class of additive processes: the additive normal tempered stable (ATS). This class of processes exhibits an exceptional ability to accurately calibrate the equity volatility surface.
 We notice that  the sub-class of additive processes that can be obtained with an independent additive subordination  is incompatible with market data and shows significantly worse calibration performances than the ATS, especially on short time maturities. These results have been observed every business day in a semester on a dataset of S\&P 500 and EURO STOXX 50 options.
	\end{abstract}
	
	\vspace*{0.11truein}
	{\bf Keywords}: 
Option Pricing; Independent time-change; Representation theorem; Additive processes.
	\vspace*{0.11truein}

		\vspace{2cm}
	\begin{multicols}{2}
		{\bf Address for correspondence:}\\
		{\bf Roberto Baviera}\\
		Department of Mathematics \\
		Politecnico di Milano\\
		32 p.zza Leonardo da Vinci \\ 
		I-20133 Milano, Italy \\
		Tel. +39-02-2399 4575\\
		roberto.baviera@polimi.it
		\columnbreak
		
		$ $\\
		{\bf  Michele Azzone}\\
	Department of Mathematics \\
	Politecnico di Milano\\
	32 p.zza Leonardo da Vinci \\ 
	I-20133 Milano, Italy \\
		Tel. +39-338-2464 527 \\
		michele.azzone@polimi.it\\
	\end{multicols}
	
	\section{Introduction}
	A class of additive processes, the additive normal tempered stable processes (ATS), has recently shown excellent calibration properties of the implied volatility surface of equity derivatives \citep{azzone2019additive}. This study investigates whether it is possible to obtain this process by (independent) additive subordination.
	The technique of embedding  a process in a Brownian motion (BM) is frequently employed  in the  literature to introduce new (tractable) models \citep[see e.g.,][Ch.4 and references therein]{Cont}.
	 \citet{monroe1978processes} proves a well known representation theorem: any local semimartingale can be represented as a time-changed  BM. Nonetheless, in general, the time-change and the BM are not independent: we prove that a local semimartingale is not equivalent to a BM with a time-change {\it independent} from the BM.
	
	\smallskip
	
 Specifically, our findings establish the nonexistence of {\it any pair} comprising a time change and an independent BM, where the resulting time-changed BM is equal in law to the ATS. Additive processes (an extension of L\'evy processes) are characterized by independent but not necessarily  stationary increments \citep[for a detailed description see][]{Sato}. In recent years, these processes emerge as a new frontier in the quantitative finance literature, with several contributions exploring their properties and applications to financial markets  \citep[see e.g.,][]{madan2020additive,carr2021additive,azzone2019additive,madan2023economics}. 
	
	\smallskip
	
	Time-change or subordination is a well-established technique for building statistically relevant models.  As is customary in the literature, we refer to a positive and non-decreasing process that is zero a.s. at time t=0 as a subordinator. Subordination finds most of its financial applications in the stationary L\'evy framework   \citep[see e.g.,][]{Cont}. Additive subordination is introduced by  \citet{galloway2008subordination} and  later formalized by \citet{mijatovic2010additive}. In particular, a subordinated BM has a well-known financial interpretability. The time-change models economic time: the more intense the market activity, the faster economic time runs compared to calendar time \citep[see e.g.,][]{madan1998variance,Geman2001Time}. 

	\smallskip
	
	It is worth noting  a standard assumption in the financial literature: the stochastic time-change and the subordinated process (e.g. the BM) are  independent.
	This independence is particularly useful in the applications: the independence between the BM and the subordinator enables straightforward  simulation of the process \citep[see e.g.,][Ch.6 and references therein]{Cont} and allows an efficient extension to the multivariate case \citep[see e.g.,][]{linders2016multivariate,luciano2016dependence}. Hereinafter, following  \citet{barndorff2006infinite}, when we refer to a subordinator, we intend a subordinator independent from the BM.
	
	\smallskip
	
	Three are the main contributions of this paper.
	
	 First, we prove that, in general, an additive process $\{f_t\}$  is not always equivalent to a subordinated BM  $\{W(aZ_t)+bZ_t+c_t\}$ with $\{Z_t\}$ a subordinator independent from the BM  $\{W(t)\}$.\footnote{We follow the notation of \citet{Sato} and indicate a stochastic process on $\mathbb{R}$, $\{X_t : t \geq 0\}$, with $\{X_t\}$. When we write $X_t$ without  brackets we are referring to the process at time $t$.} Thus, the Monroe representation theorem does not hold when we require the additional hypothesis that the BM and the time-change are independent.

	Second, we demonstrate that an additive process $\{f_t\}$ cannot be expressed as  $\{W(a_t  Z_t) + b_t  Z_t  + c_t\}$, where $a_t$ and $b_t$ are non-constant functions of time.
	
	Finally, we design a statistical test  to reject the null hypothesis that the additive process calibrated on market data can be written as a subordinated BM with the additive subordinator independent from the BM. The test is conducted on derivative prices of the two most liquid equity indexes: the S\&P 500 and the EURO STOXX 50.  The null hypothesis is rejected for all business days in a six-month time interval with p-values always
	 below 1$\tcperthousand$. 
		\smallskip
	
	The rest of the paper is organized as follows. In section \ref{sec:main}, we prove the main theoretical results: in general, an additive process is not always equivalent to a subordinated BM and  cannot be expressed as  $\{W(a_t  Z_t) + b_t  Z_t  + c_t\}$. In section \ref{section:num}, we describe the dataset, the calibration method and we implement the statistical test that  rejects the null hypothesis that the additive process calibrated on market data can be written as a subordinated BM. Section \ref{sec:conc} concludes. Moreover, in appendix \ref{app:calib}, we report additional calibration results; we enlighten why an additive process written as a subordinated BM does not replicate market data.

	\section{Can additive processes be represented with additive\\ subordinators?} \label{sec:main}
	In this section, we derive the  main theoretical results. In \textbf{Theorem \ref{theorem:no_subordination}}, we prove that, an additive process $\{f_t\}$ cannot be always represented as a subordinated BM $\{W(aZ_t)+bZ_t+c_t\}$ with  subordinator $\{Z_t\}$. The process $\{f_t\}$, which we consider as a counterexample, is relevant in the financial literature. Moreover, in \textbf{Theorem \ref{theorem: sub_diff}}, we demonstrate  that  $\{W(a_t  Z_t) + b_t  Z_t  + c_t\}$, where $a_t$ and $b_t$ are deterministic functions of time, is an additive process if and only if $a_t$ and $b_t$ are constant. In the next section, we show that the class of additive processes that accurately describes market data falls within this set of processes that cannot be represented with additive subordination.
	
	\smallskip
	
	In the following,  we consider the usual definition of subordinated BM 
	 \begin{equation}\label{eq:sub}
	 	\{W(a  Z_t) + b  Z_t  + c_t\,: t\geq 0\}\;\;,
	 \end{equation}
 where $a\in \mathbb{R}^+$ and $b\in \mathbb{R}$ are respectively  the squared volatility and the drift of the BM \citep[see e.g.,][p.82]{madan1998variance}.
We also consider   the standard definition of additive process on a probability space $(\Omega,{\cal F},\mathbb{P})$ for $t\in \mathbb{R}^+$
	 \citep[see e.g.,][Def.14.1, p.455]{Cont}.
	A c\'adl\'ag stochastic process on $\mathbb{R}$	$ \left\{X_t\right\}$, $X_0=0$ a.s.  is an additive process if and only if it has independent increments and  is continuous in probability. An additive process $\left\{X_t\right\}$ is fully characterized by a family of generating triplets $(A_t,\nu_t,\gamma_t)$, see \citet[][pp.38-39]{Sato}. The triplet consists of the diffusion term  $A_t$, the L\'evy measure  $\nu_t$,  and the drift term $\gamma_t$. 
	
	\smallskip
	
	As discussed in the introduction, in recent years, several market models based on additive processes have been developed for derivative pricing. Among these models, the class of additive normal tempered stable processes (ATS) emerges as a promising choice for modeling equity options. The ATS present excellent calibration properties --being parsimonious in terms of parameters-- and have the  correct short time behavior \citep[see][]{azzone2021short}.
	The ATS class has been introduced through its characteristic function \citep[cf.][eq.4]{azzone2019additive}: 
	\begin{equation}
		\mathbb{E}\left[e^{iuf_t}\right]={\cal L}_t \left( iu \left(\frac{1}{2}+\eta_t \right)\sigma_t^2+\frac{u^2\sigma^2_t}{2};\;k_t  ,\alpha\right)e^{iu\,\varphi_t\,t}\;\;, \label{laplace}
	\end{equation}
	where $ \sigma_t $, $k_t$ are continuous on $[0,\infty)$ and $ \eta_t $, $\varphi_t$ are continuous  on $(0,\infty)$
	with $ \sigma_t > 0$, $ k_t \geq  0$ and $\varphi_t \,t$ goes to zero as $t$ goes to zero. $ \ln {\cal L}_t$ is defined as \begin{equation*}
		\ln {\cal L}_t \left(u;\;k,\;\alpha\right) :=
		\begin{cases} 
			\displaystyle \frac{t}{k}
			\displaystyle \frac{1-\alpha}{\alpha}
			\left \{1-		\left(1+\frac{u \; k}{1-\alpha}\right)^\alpha \right \} & \mbox{if } \; 0< \alpha < 1 \\[4mm]
			\displaystyle -\frac{t}{k}
			\ln \left(1+u \; k\right)  & \mbox{if } \; \alpha = 0 \end{cases}\;\; ,
	\end{equation*}
	with $\alpha\in[0,1)$.  We observe that the characteristic function of the ATS process (\ref{laplace}) is of bounded variation.  The \citet{monroe1978processes}  representation theorem can be applied to the ATS because any additive process with a characteristic function of bounded variation over finite intervals is a semimartingale  \citep[see e.g.,][Th.4.14, p.106]{jacod}. 
	
	In the next theorem, we demonstrate that we cannot obtain  a class of additive processes (the ATS), for which the  representation theorem in \citet{monroe1978processes} holds, by time-changing a BM with {\it any} independent subordinator.
	\begin{theorem}
		\label{theorem:no_subordination}
		\noindent 
		An additive process is not equivalent to a subordinated BM.
	\end{theorem}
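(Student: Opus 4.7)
The plan is to exhibit the ATS of equation (\ref{laplace}), with a non-constant $\eta_t$, as the counterexample. The obstruction is a mismatch between the \emph{exact} exponential asymmetry of any variance--mean Gaussian mixture and the $t$-dependent asymptotic asymmetry of the ATS density tails.

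Suppose for contradiction that $\{f_t\}\stackrel{d}{=}\{W(aZ_t)+bZ_t+c_t\}$ for some constants $a>0$, $b\in\mathbb{R}$, a deterministic $c_t$, and a subordinator $\{Z_t\}$ independent of $W$. Conditioning on $Z_t=s$ gives $f_t\mid Z_t=s\sim\mathcal N(bs+c_t,as)$, so the marginal density of $f_t$ is the Gaussian mixture
\[
p_t(y)=\int_0^\infty\frac{1}{\sqrt{2\pi as}}\exp\!\Bigl(-\frac{(y-bs-c_t)^2}{2as}\Bigr)dF_{Z_t}(s).
\]
A direct cancellation in the Gaussian exponent gives the pointwise-in-$s$ identity $p(c_t+z;\,bs+c_t,as)/p(c_t-z;\,bs+c_t,as)=e^{2bz/a}$, and integrating against $dF_{Z_t}$ transports this to the mixture: $p_t(c_t+z)/p_t(c_t-z)=e^{2bz/a}$ for every $t$ and every $z>0$. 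The exponential rate $2b/a$ depends only on the constants $a$ and $b$, not on $t$.

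I would then compute the analogous ratio for the ATS. Reading (\ref{laplace}) shows that the law of $f_t$ coincides, up to a deterministic drift, with the time-$1$ marginal of a stationary normal tempered stable (NTS) L\'evy process whose parameters are determined by $\sigma_t,k_t,\eta_t$ and $t$. Standard tempered-stable computations (starting from the representation of the NTS density via a modified Bessel function) then give bilateral tails $p_t^{\mathrm{ATS}}(y)\sim C_\pm(t)|y|^{-1-\alpha}\exp(-p_\pm(t)|y|)$, with
\[
p_\pm(t)=\sqrt{(1/2+\eta_t)^2+\frac{2(1-\alpha)}{k_t\sigma_t^2}}\mp\Bigl(\tfrac{1}{2}+\eta_t\Bigr),
\]
so that the asymptotic density ratio is $p_t^{\mathrm{ATS}}(c_t+z)/p_t^{\mathrm{ATS}}(c_t-z)\sim e^{(p_-(t)-p_+(t))z}=e^{-2(1/2+\eta_t)z}$ as $z\to\infty$. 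Matching with the previous exact identity forces $2b/a=-2(1/2+\eta_t)$ for every $t$, which is impossible whenever $\eta_t$ is non-constant in $t$. The latter is the generic calibrated regime, so the contradiction establishes the theorem.

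The main technical obstacle will be this second step: the NTS identification and the Bessel asymptotics must be carried out carefully to extract the bilateral rates $p_\pm(t)$ and, crucially, their difference $p_-(t)-p_+(t)=-2(1/2+\eta_t)$. Once this is in place, comparing the exact tilt of a Gaussian mixture with the $t$-dependent asymptotic tilt of the ATS yields the contradiction immediately.
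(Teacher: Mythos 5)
Your argument is correct and reaches the paper's conclusion by a genuinely different route. The paper stays entirely on the characteristic-function side: conditioning on the subordinator gives $\mathbb{E}[e^{iu(W(aZ_t)+bZ_t+c_t)}]=\mathcal{M}_t(iub+u^2a/2)\,e^{iuc_t}$, and matching this with \eqref{laplace} forces the Laplace exponent of $Z_t$ into the tempered-stable form with coefficients $d_t,e_t$; solving $e_tb\propto(1/2+\eta_t)\sigma_t^2k_t$ and $e_ta\propto\sigma_t^2k_t$ for $e_t$ yields that $b/a$ must equal $\pm(1/2+\eta_t)$, absurd for non-constant $\eta_t$. You extract the very same scalar obstruction --- the constancy of $b/a$ --- but on the density side, by playing the \emph{exact} tilt identity $p_t(c_t+z)=e^{2bz/a}\,p_t(c_t-z)$ of variance--mean Gaussian mixtures (your first step, which is elementary and correct) against the $t$-dependent tail asymmetry of the NTS marginal. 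What each buys: the paper's matching is shorter, but it implicitly assumes $\mathcal{M}_t$ is pinned down by its values on the parabola $u\mapsto iub+u^2a/2$, which strictly requires an analytic-continuation argument for the Laplace transform; your route needs no such identification, at the price of rigorous bilateral tail asymptotics, which you rightly flag as the technical crux. Two reparable slips in your second step: (i) a sign typo --- with your ratio convention the rates are $p_\pm(t)=\sqrt{(1/2+\eta_t)^2+2(1-\alpha)/(k_t\sigma_t^2)}\,\pm\,(1/2+\eta_t)$, so the $\mp$ in your display is wrong, while your stated difference $p_-(t)-p_+(t)=-2(1/2+\eta_t)$ and the conclusion $2b/a=-2(1/2+\eta_t)$ are the correct ones; (ii) the prefactor $|y|^{-1-\alpha}$ is right for $\alpha\in(0,1)$ (where the MGF is finite at the strip edge and convolution-equivalence results apply; for $\alpha=1/2$ it reproduces the Bessel $|y|^{-3/2}$), but fails for $\alpha=0$, where the VG prefactor power depends on $t/k_t$ --- immaterial, since only the exponential rates survive the limit of $z^{-1}\ln\bigl(p_t(c_t+z)/p_t(c_t-z)\bigr)$. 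Finally, note that you can sidestep the tail analysis altogether: the moment generating function of the mixture is $v\mapsto e^{c_tv}\,\mathbb{E}\bigl[e^{(bv+av^2/2)Z_t}\bigr]$, whose interval of finiteness is symmetric about the parabola vertex $v=-b/a$, whereas from \eqref{laplace} the ATS interval is $[v_-,v_+]$ with $v_\pm=(1/2+\eta_t)\pm\sqrt{(1/2+\eta_t)^2+2(1-\alpha)/(k_t\sigma_t^2)}$, whose midpoint $1/2+\eta_t$ would have to equal $-b/a$ for all $t$ --- the same contradiction, with no Bessel asymptotics needed.
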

	\begin{proof}
		We prove the thesis with a counterexample. We demonstrate	that  it does not exist a subordinated BM $\{W(aZ_t)+bZ_t+c_t\}$ identical in law to the ATS, the additive process with characteristic function  (\ref{laplace}), if  $\eta_t$ is non-constant.
		
		We compute the characteristic function of a subordinated BM
		by conditioning to the filtration of the subordinator,
		
		\[	\mathbb{E}\left[e^{iuW(aZ_t)+b Z_t+c_t\,}\right]={\cal M}_t \left( iu\, b+\frac{u^2\,a}{2}\right)e^{iu \,c_t}\;\;, \]
		where ${\cal M}_t$ is the Laplace transform of the subordinator.  
		To obtain the same characteristic function of  (\ref{laplace}) the following relation should hold
		\begin{equation*}
			\ln {\cal M}_t \left(u\right) =
			\begin{cases} 
				\displaystyle \frac{{d}_t}{\alpha}
				\displaystyle 
				\left \{1-		\left(1+u\, e_t\right)^\alpha \right \} & \mbox{if } \; 0< \alpha < 1 \\[4mm]
				\displaystyle -d_t
				\ln \left(1+u \;e_t\right)  & \mbox{if } \; \alpha = 0 \end{cases}\;\; ,
		\end{equation*}
		where $d_t=\frac{t}{k_t}({1-\alpha})$, $e_tb=(1/2+\eta_t)\sigma^2_t\frac{k_t}{1-\alpha}$ and $e_ta=\frac{\sigma^2_t}{2}\frac{k_t}{1-\alpha}$.

		We show that, if  $\eta_t$ is not constant in $t$, we cannot match the characteristic function of the ATS. 
		By solving  $e_tb=(1/2+\eta_t)\sigma^2_t\frac{k_t}{1-\alpha}$ and $e_ta={\sigma^2_t}\frac{k_t}{1-\alpha}$ for $e_t$ we get
		
		\begin{equation*}
			\frac{{b}}{{a}}={(1/2+\eta_t)}\;\;,
		\end{equation*}
		which is absurd if $\eta_t$ is non-constant. The statement holds because we have verified the identity in law
	\end{proof}
	
	Let us point out the main result of   \textbf{Theorem \ref{theorem:no_subordination}}. We have presented a counterexample utilizing the ATS with a non-constant $\eta_t$, proving that --if  $\eta_t$ is non-constant-- it is impossible to construct an ATS as a subordinated BM. What does happen if $\eta_t$ is constant? In the following, we prove by construction that,  if $\eta_t$ is constant, it is possible to build the ATS as a subordinated BM.
	\smallskip
	
	To substantiate this claim on the ATS with constant $\eta_t$, we  introduce 
	an  additive subordinator.\footnote{We recall that the additive subordinator is an additive process which is a subordinator, see e.g., \citet{mijatovic2010additive}.} 
	We call this subordinator the additive tempered stable subordinator (TSS). It is a natural extension  of the L\'evy tempered stable subordinator  \citep[see e.g.,][p.127]{Cont} when considering time-dependent  $k_t$ and  $\sigma_t$.  We define below its family of  generating triplets.\footnote{	In line with the prevailing conventions in the literature \citep[see e.g.,][Ch.3]{Cont}, we identify $\nu_t$, a L\'evy measure  on $\mathbb{R}$,    with the L\'evy density $\nu_t(x)$ such that 
		$\int_B \nu_t(x) dx=\nu_t(B) \;\forall B \in \mathbb{B} (\mathbb{R})$ and  $B \subset \{ x : |x|>\epsilon >0 \}$.}

	\begin{definition} \label{definition:ATSSub}
		The TSS  $\left \{Z_t\right \}$ is characterized by the family of triplets $(0,{\cal V}_t,\Gamma_t)$
		\begin{align}
			\label{eq:TSS}
			\begin{cases}
				{\cal V}_t\left(x\right)&:=\displaystyle \frac{t\sigma^{2\alpha}_t}{\Gamma(1-\alpha)}\left({\frac{1-\alpha}{k_t}}\right)^{1-\alpha}\left(\frac{e^{-\left(1-\alpha\right) \; x/(\sigma^2_t k_t) }}{x^{1+\alpha}}\right)\mathbbm{1}_{x>0}\\[4mm]
				\Gamma_t&:= \displaystyle \int^1_0 x \; {\cal V}_t(x) \; dx\;\;, \\
			\end{cases}
		\end{align}
		where $\alpha \in [0,1)$.
		$\sigma_t$ and $k_t$ are positive continuous functions of time such that
		\begin{enumerate}
			\item \( \displaystyle t \; {\sigma^2_t} \)  \quad \quad\;\;  is \( o\left(1\right)   \) for small $t$; 
			\item \(\displaystyle  \frac{ t}{k_t^{1-\alpha}} \; \sigma^{2\alpha}_t \;\; \) is \( o\left(1\right) \)  for small $t$ and non-decreasing;
			\item \( \displaystyle \sigma^2_t \; k_t \,\)\; \quad \quad is non-decreasing.
		\end{enumerate}
		\label{defsub}
	\end{definition}
	In \textbf{ Definition \ref{definition:ATSSub}}, we have introduced a family of triplets; we need to demonstrate   the existence of an additive subordinator characterized by these triplets. In order to accomplish this task, we  first  
	derive certain sufficient conditions under which an additive process is an additive subordinator.   Then, we  employ these conditions to prove that the TSS is indeed an additive subordinator.
	\begin{proposition} 
		An additive process  $\left \{Z_t \right \}$  is an additive subordinator if its family of generating triplets is such that,  for every fixed time $t$,  
		$A_t=0$, $b_t:=\gamma_t -\int_{0\leq x \leq 1}{x  \; \nu_t (dx)}$ non-decreasing and $\nu_t$  such that i)  $\int_{\mathbb{R}}\left(|x|\wedge 1\right)\nu_t(dx)<\infty$, ii) $\nu_t((-\infty,0])=0$.
		Moreover,  an additive subordinator $\left \{Z_t \right \}$  has a characteristic function with exponent
		\begin{equation} 
			\ln{\mathbb{E}\left[e^{iuZ_t}\right]}=ib_t u+\int_{x>0}\left(e^{iux}-1 \right)\nu_t(x)dx\;\;.  \label{b}
		\end{equation}
		\label{p2.5}
	\end{proposition}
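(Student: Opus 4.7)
The plan is to treat the two assertions in turn --- first the sufficient condition under which an additive process is a subordinator, then the characteristic-function formula for any additive subordinator --- using the L\'evy--Khintchine representation of the triplet $(0,\nu_t,\gamma_t)$ as the unifying tool.

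For the sufficient condition, I would start from
\[\ln\mathbb{E}\left[e^{iuZ_t}\right] \;=\; i\gamma_t u + \int_{\mathbb{R}}\bigl(e^{iux}-1-iux\,\mathbbm{1}_{|x|\leq 1}\bigr)\,\nu_t(dx),\]
and use the integrability $\int(|x|\wedge 1)\,\nu_t(dx)<\infty$ together with $\nu_t((-\infty,0])=0$ to absorb the truncation term $-iu\int_{0}^{1} x\,\nu_t(dx)$ into the drift, producing exactly the exponent (\ref{b}) with $b_t=\gamma_t-\int_{0\le x\le 1} x\,\nu_t(dx)$. To obtain monotonicity of paths, I would exploit independence of increments: for $0\le s<t$, the increment $Z_t-Z_s$ has characteristic exponent
\[i(b_t-b_s)\,u \;+\; \int_{x>0}\bigl(e^{iux}-1\bigr)\,(\nu_t-\nu_s)(dx).\]
By the consistency conditions on the generating triplets of additive processes (cf.\ \citet{Sato}, Thm.~9.8), $\nu_t-\nu_s$ is a non-negative measure, which by assumption is concentrated on $(0,\infty)$; combined with $b_t-b_s\ge 0$ (the assumed monotonicity of $b_t$, together with $b_0=0$ which follows from $Z_0=0$), this is the characteristic exponent of an infinitely divisible law supported on $[0,\infty)$. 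Hence $Z_t\ge Z_s$ almost surely, and the c\`adl\`ag modification of $\{Z_t\}$ then has non-decreasing sample paths almost surely; together with $Z_0=0$ this establishes that $\{Z_t\}$ is an additive subordinator.

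For the ``moreover'' assertion I would argue in reverse: if $\{Z_t\}$ is an additive subordinator its paths are non-decreasing and of finite variation on compacts. Applied to the L\'evy--It\^o decomposition of $\{Z_t\}$ this forces $A_t\equiv 0$ (no Brownian component could be monotone), $\nu_t$ concentrated on $(0,\infty)$ (no negative jumps are allowed by monotonicity), and $\int(x\wedge 1)\,\nu_t(dx)<\infty$ (finite variation of the pure-jump part); once these structural facts are in place, the same absorption of the truncation term as above yields~(\ref{b}). The step I expect to be most delicate is the non-negativity of the signed measure $\nu_t-\nu_s$: this is not implied by ``$\nu_t$ is a L\'evy measure for each $t$'' alone but is a consistency property of additive-process triplets that must be invoked precisely; once it is available, the remainder is routine manipulation of L\'evy--Khintchine exponents together with the classical characterization of subordinator distributions on $[0,\infty)$.
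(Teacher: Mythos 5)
Your proof is correct and, on the core claims, takes essentially the same route as the paper: both start from the L\'evy--Khintchine representation with $A_t=0$, use $\int_{\mathbb{R}}(|x|\wedge 1)\,\nu_t(dx)<\infty$ to absorb the truncation term $-iu\int_0^1 x\,\nu_t(dx)$ into the drift and obtain (\ref{b}), and both establish monotonicity by computing the exponent of $Z_t-Z_s$ from independence of increments and invoking exactly the two ingredients you name --- $b_t-b_s\ge 0$ (with the same observation, relegated to a footnote in the paper, that $b_0=0$ because $\gamma_0=0$ and $\nu_0=0$) and the non-negativity of $\nu_t-\nu_s$ from Theorem 9.8 of \citet{Sato}, which you rightly flag as the one step that must be invoked rather than assumed. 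Two modest differences, both in your favor. First, your integrability bookkeeping is more direct: the hypothesis on $\nu_t$ dominates $|e^{iux}-1|\le C_u\left(|x|\wedge 1\right)$ outright, whereas the paper argues more circuitously that the truncated term $\mathbbm{1}_{|x|\le 1}x\,\nu_t(x)$ is integrable and then deduces integrability of $(e^{iux}-1)\nu_t(x)$ from well-definedness of the characteristic function. Second, you read the ``moreover'' clause as a statement about \emph{every} additive subordinator and supply a genuine converse, extracting $A_t\equiv 0$, $\nu_t((-\infty,0])=0$ and $\int (x\wedge 1)\,\nu_t(dx)<\infty$ from monotone, finite-variation paths via the L\'evy--It\^o decomposition before re-deriving (\ref{b}); the paper proves (\ref{b}) only under the stated triplet conditions and leaves this converse implicit, so your version is the more faithful reading of the statement as written. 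A final small credit: you note that $Z_t-Z_s\ge 0$ a.s.\ for each fixed pair yields monotone paths only after passing through the c\`adl\`ag property, a step the paper's proof glosses over when it concludes directly that the process is non-decreasing.
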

	\begin{proof}
		This proof extends to additive subordination a known property of L\'evy subordination \citep[see e.g.,][Prop.3.10, p.100]{Cont}.
		\smallskip
		
		First, we prove that, given the conditions on the characteristic triplets, equation (\ref{b}) holds. Define  $L_t(x):=\mathbbm{1}_{|x|\leq 1}x\, \nu_t(x)$ and $M_t(x):=\left(e^{iux}-1 \right)\nu_t(x)$.
		We have that
		\begin{equation*}
			\ln{\mathbb{E}\left[e^{iuZ_t}\right]}={
				i\gamma_t u+\int_{\mathbb{R}}\left(e^{iux}-1- \mathbbm{1}_{|x|\leq 1}iux\right){\nu}_t(x) \, dx}={
				i\gamma_t u+\int_{\mathbb{R}}\left(iuL_t(x)+M_t(x)\right)dx}\;\; .
		\end{equation*}
		The first equality is due to the definition of the characteristic function of an additive process  with no diffusion  \citep[cf.][Th.8.1, p.37]{Sato}. $L_t(x)$  is integrable with respect to  $x$	
		thanks to the conditions on $\nu_t $.
		The sum of $iuL_t(x)$ and $M_t(x)$ is integrable, because $\mathbb{E}\left[e^{iuZ_t}\right]$ is a well defined characteristic function, thus $M_t(x)$ is integrable too. 
		We can split the integral and check the thesis defining $b_t:=\gamma_t-\int_{0\leq x \leq 1}{x  \nu_t (dx)}$. This proves equation (\ref{b}).
		
		\smallskip
		
		Second,	 $\forall\;s,\;t$ such that $ 0\leq s<t$,  we prove that the increment $Z_t-Z_s$ is  a  non-negative random variable almost surely.
		By definition of additive process,  we get the characteristic function of $Z_t-Z_s$
		\[\mathbb{E}\left[e^{iu\left(Z_t-Z_s\right)}\right]=\mathbb{E}\left[e^{i u Z_t}\right]/\mathbb{E}\left[e^{i u Z_s}\right]\;\;\]
		and, using  (\ref{b}), we obtain an explicit formula for its exponent
		\[\ln \mathbb{E}\left[e^{iu\left(Z_t-Z_s\right)}\right] = i(b_t-b_s) u+\int_{x>0}\left(e^{iux}-1 \right)(\nu_t(x)-\nu_s(x))dx\;\;.\] 
		We observe that  $ b_t-b_s$ is non-negative, because $b_t$ is non-decreasing by hypothesis\footnote{We point out that if $s=0$ then $b_t\geq0\,\forall t$ because $b_t$ is non decreasing and $b_0=0$. The latter is true because for any additive process $\gamma_0=0$ and $\nu_0=0$.}, and that $\nu_t(x)-\nu_s(x)$ is a jump measure with non-negative jumps; $\nu_t(x)-\nu_s(x)$ is a non-negative function thanks to \citet[][Th.9.8, p.52]{Sato}  and with value on $[0,\infty)$ by hypothesis. Thus, the increment   $Z_t-Z_s$ is positive a.s. having  a positive drift and a positive jump measure.
		Summing up, a non-decreasing additive process with $Z_{t=0}$=0 a.s. is an additive subordinator  
	\end{proof}
	
	In the next proposition, we prove that the TSS exists by checking that it is an additive process \citep[i.e. that the triplet of the TSS  satisfies the conditions of][Th.9.8, p.52]{Sato} and that it is an additive subordinator (showing that it verifies the sufficient conditions of \textbf{Proposition \ref{p2.5}}).
	\begin{proposition}   \label{corollary:Additive_sub} 
		The additive tempered stable subordinator (TSS) exits and has $b_t=0$.
		\label{cor: 2}
	\end{proposition}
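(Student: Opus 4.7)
The plan is to argue existence of the TSS via Sato's Theorem 9.8 and then apply \textbf{Proposition \ref{p2.5}} to conclude the subordinator property, finally computing $b_t$ in closed form.

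First, I would verify that the triplet $(0,\mathcal{V}_t,\Gamma_t)$ satisfies the hypotheses of \citet[][Th.9.8, p.52]{Sato}. The diffusion coefficient $A_t=0$ is trivially non-decreasing. For the L\'evy measure, one must show $\mathcal{V}_s(B)\leq \mathcal{V}_t(B)$ for $s\leq t$ and every Borel set $B\subset(0,\infty)$; since the measures admit the explicit density in \eqref{eq:TSS}, it suffices to verify pointwise monotonicity of $\mathcal{V}_t(x)$ in $t$ at each fixed $x>0$. Factoring the density into the prefactor $t\,\sigma_t^{2\alpha}/k_t^{1-\alpha}$ (up to $\alpha$- and $x$-dependent constants) and the exponential $\exp\bigl(-(1-\alpha)\,x/(\sigma_t^2 k_t)\bigr)$, Condition 2 of \textbf{Definition \ref{definition:ATSSub}} makes the prefactor non-decreasing in $t$, while Condition 3, combined with $x>0$ and $1-\alpha>0$, makes the exponent less negative in $t$, so the product is non-decreasing. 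Continuity of the triplet in $t$ follows from continuity of $\sigma_t$ and $k_t$, and the integrability $\int(|x|^2\wedge 1)\,\mathcal{V}_t(x)\,dx<\infty$ holds because near zero $x^2\mathcal{V}_t(x)\sim x^{1-\alpha}$ (integrable for $\alpha<1$) while at infinity the exponential decay dominates. Condition 1 supplies the correct behaviour as $t\to 0$, giving $Z_0=0$ a.s.\ and continuity in probability at the origin.

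Second, I would apply \textbf{Proposition \ref{p2.5}}. The structural hypotheses $A_t=0$ and $\mathcal{V}_t((-\infty,0])=0$ are immediate from \eqref{eq:TSS}. The integrability $\int_0^\infty(x\wedge 1)\,\mathcal{V}_t(x)\,dx<\infty$ holds because near zero $x\,\mathcal{V}_t(x)\sim x^{-\alpha}$, which is integrable since $\alpha\in[0,1)$. For the final hypothesis that $b_t$ be non-decreasing, I use the defining identity $\Gamma_t:=\int_0^1 x\,\mathcal{V}_t(x)\,dx$ from \textbf{Definition \ref{definition:ATSSub}} to obtain
\begin{equation*}
b_t \;=\; \gamma_t - \int_0^1 x\,\mathcal{V}_t(x)\,dx \;=\; \Gamma_t-\Gamma_t \;=\; 0,
\end{equation*}
which is trivially non-decreasing and, simultaneously, delivers the claim $b_t=0$.

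The main obstacle is the first step: the pointwise monotonicity of $\mathcal{V}_t(x)$ in $t$ must be disentangled carefully because the two time-dependent quantities, $t\,\sigma_t^{2\alpha}/k_t^{1-\alpha}$ and $\sigma_t^2 k_t$, that Conditions 2 and 3 respectively control appear multiplicatively in the density and cannot be treated in isolation. The drift $\Gamma_t$ was designed precisely so that the truncated-mean correction cancels, so the $b_t=0$ part is essentially a consequence of the right choice of truncation in \textbf{Definition \ref{definition:ATSSub}}. Once pointwise monotonicity is secured, the remaining verification of Sato's hypotheses and the application of \textbf{Proposition \ref{p2.5}} reduce to routine integrability checks.
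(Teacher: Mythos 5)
Your proposal is correct and follows essentially the same route as the paper: existence via Sato's Theorem 9.8 (with the same factorization of the density into the Condition-2 prefactor $t\,\sigma_t^{2\alpha}/k_t^{1-\alpha}$ and the Condition-3 exponential $e^{-(1-\alpha)x/(\sigma_t^2 k_t)}$), then the subordinator property via \textbf{Proposition \ref{p2.5}}, with $b_t=0$ by direct substitution of $\Gamma_t$. The only cosmetic differences are that the paper handles integrability and the $t\to 0$ limit of $\Gamma_t$ through the exact identity $\int_0^\infty x\,\mathcal{V}_t(x)\,dx = t\sigma_t^2$ where you use local asymptotics, and that the vanishing of $\mathcal{V}_t(B)$ as $t\to 0$ rests on Condition 2 (which you invoke for monotonicity anyway) rather than Condition 1 alone.
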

	\begin{proof}
		First, we prove that 	$ \left \{Z_t\right \}$ in \textbf{Definition \ref{definition:ATSSub}} is an additive process using  \citet[][Th.9.8, p.52]{Sato}; that is, we check whether the triplet introduced in (\ref{eq:TSS})  satisfies the theorem conditions.
		\begin{enumerate}
			\item The triplet has no diffusion term.
			\item ${	{\cal V}}_t$ is not decreasing in $t$ because both $\sigma_t^{2\alpha}t/ k_t^{1-\alpha}$ and $\sigma_t k_t$ are non-decreasing in $t$.
			\item For $t>0$, the continuity of ${ 	{\cal V}}_t(B)$, where $B\in\mathbb{B}\left(\mathbb{R}^+\right)$ and $B \subset \{ x : |x|>\epsilon >0 \}$, is due to the composition of  continuous functions. 
			For $t=0$ we can extend 
			${ 	{\cal V}}_t(B)$ and ${ \Gamma}_t$ to 0 since both converge to 0 as $t\rightarrow 0$. The convergence of ${ \Gamma}_t$ to 0 is due to ${ \Gamma}_t$ positiveness and  
			to 
			\begin{equation}
				\Gamma_t\leq	\int_{0}^{\infty}{\left(|x|\wedge 1\right){ 	{\cal V}_t\left({x} \right){dx}}}\leq	\int_{0}^{\infty}{x	{\cal V}_t\left(x \right)dx}=t\sigma^2_t\;\;.   \label{eq:conv} 
			\end{equation}
			
			The convergence of ${ 	{\cal V}}_t(B)$ to 0 is due to the dominated convergence theorem. We observe that, $ \forall x \in \mathbb{R}^+ $  such that
			$ |x|>\epsilon >0$, 
			${ {\cal V}}_t(x)$ is finite and a decreasing function of $t$.
		\end{enumerate}

		Second, we verify that the conditions of  {\bf Proposition \ref{p2.5}} on the generating triplet of an additive subordinator are satisfied 
		by $ \left \{ Z_t\right \}$. Let us observe that there is no diffusion term and,
		accordingly to (\ref{eq:conv}), $\int_{0}^{\infty}{\left(|x|\wedge 1\right){ 	{\cal V}_t\left({x} \right){dx}}}<\infty$. 
		
		Moreover, ${{\cal V}}_t((-\infty,0])=0$ and $b_t$ is null by direct substitution of $\Gamma_t$ in the formula of  {\bf Proposition \ref{p2.5}}. Thus, $ \left \{ Z_t\right \}$ is an additive subordinator\end{proof}
	In the next proposition, we utilize the TSS  to construct the ATS --with $\eta_t$ constant-- via additive subordination.
	\begin{proposition}\label{theore_if_only}
		It is possible to construct via additive subordination an additive process with characteristic function  (\ref{laplace}) if and only if $\eta_t$ is constant.
	\end{proposition}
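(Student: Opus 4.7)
The plan is to split the biconditional into its two directions, treating the necessity as a direct consequence of \textbf{Theorem \ref{theorem:no_subordination}} and devoting the main effort to the constructive sufficiency.

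For the ``only if'' direction, I would simply invoke \textbf{Theorem \ref{theorem:no_subordination}}: the matching of Laplace transforms carried out there forces the algebraic identity $b/a = 1/2+\eta_t$, and since the left-hand side is constant in $t$ this already implies $\eta_t$ must be constant whenever any subordinated BM equal in law to the ATS exists. So necessity is a reformulation of what is already established.

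For the ``if'' direction, assume $\eta_t \equiv \eta$ and take the TSS $\{Z_t\}$ of \textbf{Definition \ref{definition:ATSSub}}, using the parameters $\tilde k_t = k_t$ and a rescaled $\tilde\sigma_t$ obtained from $\sigma_t$ and a free normalization constant $a>0$. Consider the candidate subordinated BM
\begin{equation*}
X_t := W(a\,Z_t) + b\,Z_t + c_t, \qquad b := (1/2+\eta)\,a, \qquad c_t := \varphi_t\,t,
\end{equation*}
with $W$ independent of $Z$. The plan is to compute the characteristic function of $X_t$ by conditioning on $Z_t$, obtaining $\mathcal{M}_t(iub + u^2 a/2)\,e^{iu c_t}$, and then show it matches (\ref{laplace}) term by term.

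The main technical step is the evaluation of $\mathcal{M}_t$. Using \textbf{Proposition \ref{p2.5}} together with \textbf{Proposition \ref{corollary:Additive_sub}} (which gives $b_t=0$), the Laplace exponent reduces to $\int_{x>0}(e^{-wx}-1)\,\mathcal{V}_t(x)\,dx$. This is the standard tempered-stable integral: for $\alpha\in(0,1)$ it evaluates, via the identity $\int_0^\infty (e^{-sx}-1)\,x^{-1-\alpha}e^{-\lambda x}dx = \Gamma(-\alpha)[(s+\lambda)^\alpha-\lambda^\alpha]$ and $\Gamma(-\alpha)=-\Gamma(1-\alpha)/\alpha$, to exactly $\frac{t(1-\alpha)}{\alpha k_t}\bigl[1-(1+w\tilde\sigma_t^2 k_t/(1-\alpha))^\alpha\bigr]$; the $\alpha=0$ case is treated analogously by a Gamma-type integral and yields $-\frac{t}{k_t}\ln(1+w\tilde\sigma_t^2 k_t)$. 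Substituting $w = iub + u^2 a/2$ and choosing the normalization so that $a\,\tilde\sigma_t^2 = \sigma_t^2$, the bracket collapses onto $iu(1/2+\eta)\sigma_t^2 + u^2\sigma_t^2/2$, which is exactly the argument of $\mathcal{L}_t$ in (\ref{laplace}); the $e^{iu\varphi_t t}$ factor is provided by $c_t$.

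The last item to check is that the rescaled $(\tilde\sigma_t,\tilde k_t)$ still satisfy the three regularity conditions of \textbf{Definition \ref{definition:ATSSub}}; this is immediate because the conditions are invariant under the multiplicative rescaling $\tilde\sigma_t = \sigma_t/\sqrt{a}$ and inherit directly from the assumptions on $\sigma_t,k_t$ built into the ATS. The only genuinely delicate point, and the main obstacle, is managing the two branches $\alpha\in(0,1)$ and $\alpha=0$ without losing track of prefactors; the ratio constraint $b/a=1/2+\eta$ is what makes everything compatible in the constant-$\eta$ regime and is the precise obstruction recovered in the ``only if'' direction.
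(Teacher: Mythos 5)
Your proposal follows the same route as the paper: necessity by invoking \textbf{Theorem \ref{theorem:no_subordination}}, sufficiency by instantiating the TSS with the (constant-$\eta$) ATS parameters and matching the characteristic function by conditioning on the subordinator. Your explicit evaluation of the Laplace exponent via $\int_0^\infty(e^{-sx}-1)\,x^{-1-\alpha}e^{-\lambda x}\,dx=\Gamma(-\alpha)\left[(s+\lambda)^\alpha-\lambda^\alpha\right]$ and the Frullani-type integral for $\alpha=0$ is correct, and is in fact more detailed than the paper, which compresses this step into ``thanks to the tower property of expectations, it is possible to verify''. One small remark on signs: with the genuine Laplace transform $\mathcal{M}_t(w)=\mathbb{E}[e^{-wZ_t}]$, conditioning gives $\mathcal{M}_t(-iub+u^2a/2)e^{iuc_t}$, so the matching drift is $b=-(1/2+\eta)a$, consistent with the minus sign in the paper's equation (\ref{eq:LTS}); you inherited the sign convention of the proof of \textbf{Theorem \ref{theorem:no_subordination}} (which contains the same slip), and nothing structural depends on it since only the content of the bracket and the constancy of $b/a$ matter.

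There is, however, one genuine gap. The proposition asks for the construction of an \emph{additive process} with characteristic function (\ref{laplace}), and you only verify that the one-dimensional marginal characteristic functions of $X_t=W(aZ_t)+bZ_t+c_t$ match. Marginal laws alone do not determine the law of a process, and additivity of a time-changed BM is not automatic --- indeed \textbf{Theorem \ref{theorem: sub_diff}} shows precisely that it fails once the coefficients depend on time, so this point cannot be waved through. The paper closes it by citing Proposition 1 of \citet{mijatovic2010additive}: for constant $a,b$ the process $W(aZ_t)+bZ_t$ is additive when $\{Z_t\}$ is an additive subordinator independent of the BM, and adding the deterministic continuous function $\varphi_t\,t$ (which vanishes as $t\to 0$ by the ATS hypotheses, so $X_0=0$ a.s.) preserves additivity. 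Once additivity is established, independent increments imply that the marginal characteristic functions determine the finite-dimensional law, and your term-by-term match then identifies the constructed process in law with the ATS. Add that citation (or equivalently invoke the if-part of \textbf{Theorem \ref{theorem: sub_diff}}) and your argument is complete.
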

	\begin{proof}
		We prove the only if using the results in  \textbf{Theorem \ref{theorem:no_subordination}}. In particular, we have shown 	that is impossible to construct via independent subordination the ATS, an additive process with characteristic function  (\ref{laplace}), if  $\eta_t$ is non-constant.
		
		We prove the if. Notice that  the conditions for the existence of the TSS are satisfied by any couple $k_t$, $\sigma_t$  that satisfies the condition of existence of the ATS \citep[cf.][Th.2.1, p.503]{azzone2019additive} with $\eta_t$ constant. Hence, for any ATS with $\eta_t$ constant, we can define the TSS.
		Thanks to \citet{mijatovic2010additive}, Prop.1, p.2,	the process $W(Z_t) +(1/2+\eta)Z_t$ is an additive process, where $\{Z_t\}$ is the TSS in (\ref{eq:TSS}) independent from the BM  and $\eta$ is a positive constant. 
		Adding the deterministic continuous function $\varphi_t t$ to $W(Z_t) +(1/2+\eta)Z_t$ preserves the additivity property.
		Moreover, thanks to the tower property of expectations, it is possible to verify  that $W(Z_t)-(1/2+{\eta})Z_t+\varphi_t t$ has the same characteristic function of the ATS
	\end{proof}

Let us comment on this  proposition. A subcase of the ATS --when $\eta_t$ is constant--  can be obtained through additive subordination. Interestingly, the parameter $\eta_t$ is connected with the model implied volatility skew; i.e. is the at-the-money derivative  of the implied volatility with respect to the strike price \citep[for a definition see e.g.,][Ch.3, p.35]{gatheral2011volatility}. 
It is well established, in the equity case, that the  short time skew is proportionally inverse to the square root of the time to maturity. 

In Figure \ref{Figure::Short_time_skew}, we present an example of  the short time implied volatility  skew for the EURO STOXX 50   at a given date,
the $21^{st}$ of March 2019.
We plot the market skew with respect to the time to maturity $t$: it appears to be well described by a fit $O\left(\sqrt{{1}/{t}}\right)$.  
\begin{center}
	\begin{minipage}[t]{1\textwidth}
		\includegraphics[width=\textwidth]{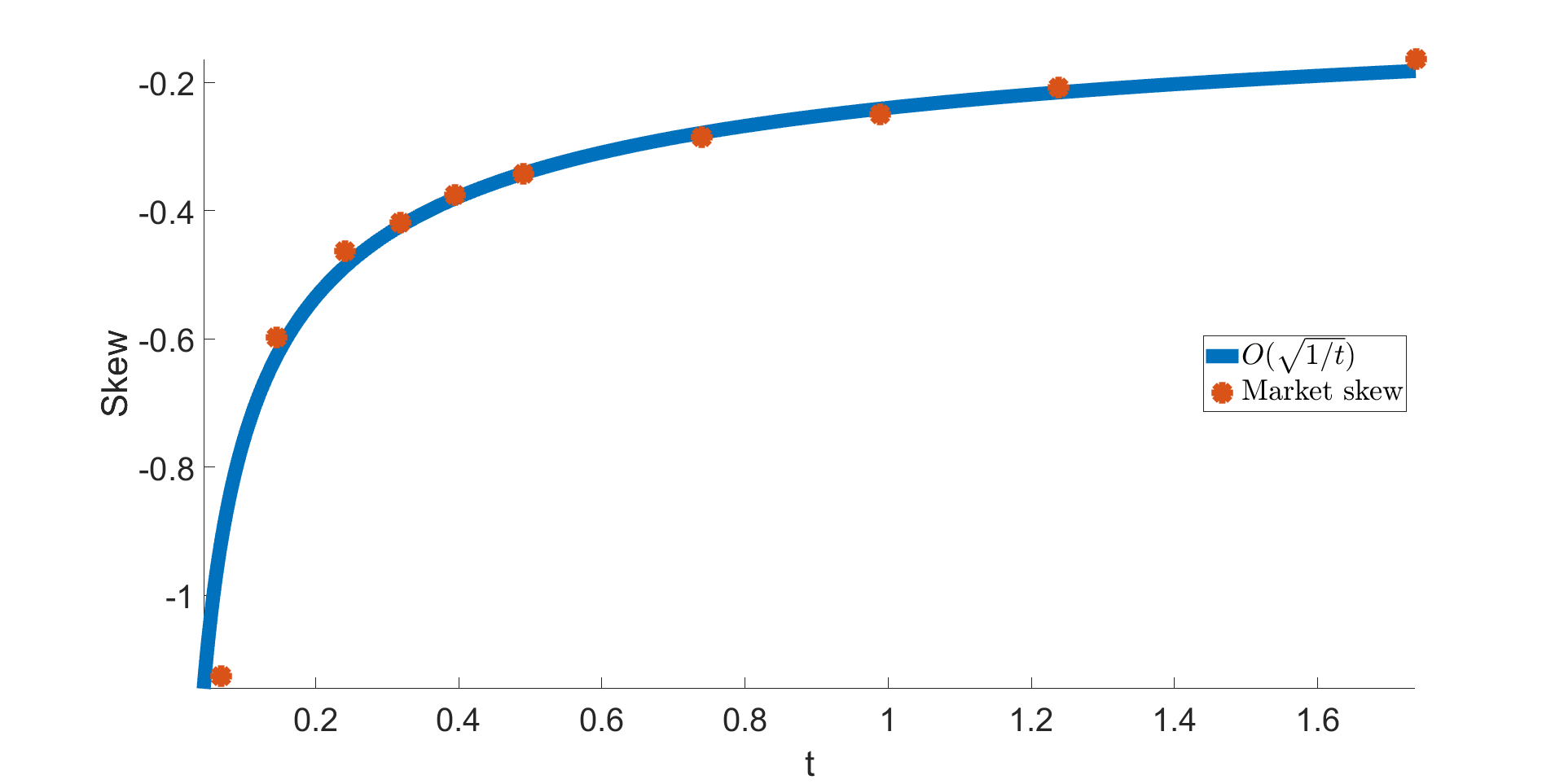}
		\label{Figure::Short_time_skew}\captionof{figure}{\small Example of the EURO STOXX 50 short-time implied volatility  skew on the $21^{st}$ of March 2019. We plot the market skew with respect to the time $t$ and an $O\left(\sqrt{{1}/{t}}\right)$ curve.}
	\end{minipage} 
\end{center}
It has been proven that the ATS reproduces this feature of the equity market implied volatility  if and only if $\eta_t$ is non constant and is proportional to $t^{-1/2}$ \citep{azzone2021short}. In the upcoming section, we propose a statistical test to provide evidence of a non-constant $\eta_t$ in the S\&P 500 and EURO STOXX 50 derivative markets.	
	\smallskip

	In \textbf{Proposition \ref{theore_if_only}}, we have established that 	it is possible to construct via additive subordination the ATS if and only if $\eta_t$ is constant. In the following, we demonstrate that it is not possible to construct the ATS with non-constat  $\eta_t$ even with a  generalization of  additive subordination: processes within the class $\{W(a_t  Z_t) + b_t  Z_t  + c_t\}$. Notice that this structure is more general than the subordinated BM (\ref{eq:sub}) because $a_t$ and $b_t$ are functions of time. The generalization stems naturally from L\'evy subordination. Let us explain why. The ATS is an extension of the well-known  L\'evy normal tempered stable (LTS) process \citep[see][Ch.4]{Cont}. The LTS $	\{g_t\} $ is usually built through L\'evy subordination as
	\begin{equation}
		g_t =  - \left( \eta + \frac{1}{2} \right) \; Z_t +  \; W_{Z_t}+ \varphi\, t\qquad \; ,
		\label{eq:LTS}    
	\end{equation}
	where 
	$ \eta$ is a real parameter 
	while 
	$\varphi$ is obtained by imposing the martingale condition on the forward price. 
	$\{Z_t\}$ is a L\'evy tempered stable subordinator with mean $\sigma^2t$ and variance $\sigma^4kt$. The most common example of LTS are NIG \citep{Barndorff1997NIG} and VG \citep{madan1998variance}. The former is built with an Inverse Gaussian L\'evy subordinator ($\alpha=1/2)$ while the latter with a Gamma L\'evy subordinator ($\alpha=0$). 
	
	\smallskip
	
	Can we build the ATS in a similar manner to  equation (\ref{eq:LTS}) but with time-dependent parameters and an additive subordinator? We consider the  process $	\{g_t\} $ such that
	\begin{equation}
		\label{eq: master_sub}
		g_t =-  \left( \eta_t + \frac{1}{2} \right) \; \;  Z_t +  \;  W_{ Z_t}+ \varphi_tt  \;\;, 
	\end{equation}
	where $\{Z_t\}$ is an additive subordinator. We point out that  $	\{g_t\}$ in (\ref{eq: master_sub}) is not a subordinated BM as in \eqref{eq:sub} because it   depends on $ \eta_t$ (a function of $t$). Let us notice that, if $\{Z_t\}$ is the TSS of \textbf{Definition \ref{definition:ATSSub}} the process $	\{g_t\}$ at time $t$ has the same characteristic function of the ATS \eqref{laplace}.
	Consequently, if a process in the form $\{W\left(a_t{Z}_t\right)+b_t {Z}_t+c_t\}$ is an additive process for  $a_t$ positive, increasing and continuous, and $b_t$ and $c_t$ real and continuous, we would obtain that the process in (\ref{eq: master_sub}) is identical in law to the ATS.
	However, in the next theorem, we demonstrate that  a process in the form $\{W\left(a_t{Z}_t\right)+b_t {Z}_t+c_t\}$ is an additive process if and only if $a_t$ and $b_t$ are constant; i.e. it corresponds to a subordinated BM \eqref{eq:sub}. 
	
	\begin{theorem} \label{theorem: sub_diff}
		The process $ \{W\left(a_t{Z}_t\right)+b_t {Z}_t+c_t\}$, with $a_t$ positive, increasing and continuous and $b_t$ real and continuous, is an additive process if and only if $a_t$ and $b_t$ are  constant in time.
	\end{theorem}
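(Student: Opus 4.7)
The \textbf{if} direction is immediate: if $a_t\equiv a$ and $b_t\equiv b$, the process reduces to the standard subordinated Brownian motion $W(aZ_t)+bZ_t+c_t$, which is additive by \citet{mijatovic2010additive}, Prop.~1, applied to the additive subordinator $\{Z_t\}$ of \textbf{Proposition \ref{p2.5}}; adding the continuous deterministic term $c_t$ preserves additivity. For the \textbf{only if} direction, I would assume that $\{X_t\}:=\{W(a_tZ_t)+b_tZ_t+c_t\}$ is additive, so that $X_t-X_s$ is independent of $X_s$ for every $0\le s<t$, and translate this independence into algebraic constraints via the second- and fourth-order joint moments. The essential observation is that, because $a_t$ is non-decreasing and $Z_t$ is non-decreasing a.s., $a_sZ_s\le a_tZ_t$ a.s.; hence conditionally on the full path $\{Z_u\}$, the variables $W(a_sZ_s)$ and $W(a_tZ_t)-W(a_sZ_s)$ are independent centered Gaussians of variances $a_sZ_s$ and $a_tZ_t-a_sZ_s$. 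Coupled with the additive-subordinator property $Z_s\independent Z_t-Z_s$, all joint moments of $(X_s,X_t-X_s)$ reduce to polynomials in the moments of $Z_s$ and $Z_t-Z_s$.

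First I would extract the constraint on $b_t$. Independence forces $\mathrm{Cov}(X_s,X_t-X_s)=0$; a direct tower-property computation, in which all Brownian cross-terms vanish because $W\independent Z$ and Brownian increments are orthogonal, yields
\[
\mathrm{Cov}(X_s,X_t-X_s)=b_s(b_t-b_s)\,\V(Z_s).
\]
Since $\V(Z_s)>0$ for any non-degenerate additive subordinator (and in particular for the TSS of \textbf{Definition \ref{definition:ATSSub}} with non-zero L\'evy measure), this gives $b_s(b_t-b_s)=0$ for all $s<t$, and continuity of $b_t$ then forces $b_t\equiv b$ constant. With $b_t\equiv b$, independence of $X_s$ and $X_t-X_s$ also implies $\mathbb{E}[X_s^2(X_t-X_s)^2]=\mathbb{E}[X_s^2]\,\mathbb{E}[(X_t-X_s)^2]$. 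Conditioning on $\{Z_u\}$, the conditional product splits via Gaussian independence of $W(a_sZ_s)$ and $W(a_tZ_t)-W(a_sZ_s)$; taking expectations and using $Z_s\independent Z_t-Z_s$ to split the resulting joint moments of $Z$, the equation collapses after cancelation to
\[
(a_t-a_s)\bigl[a_s\V(Z_s)+b^2\,\mathrm{Cov}(Z_s,Z_s^2)\bigr]=0.
\]
The bracket is strictly positive: $a_s>0$ by hypothesis and $\V(Z_s)>0$, while $\mathrm{Cov}(Z_s,Z_s^2)\ge 0$ by comonotonicity of $Z_s$ with $Z_s^2$ on $\{Z_s\ge 0\}$ (Chebyshev's sum inequality). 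Therefore $a_t=a_s$ for all $0<s<t$, i.e., $a_t\equiv a$ constant.

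The main obstacle is making the conditioning step rigorous: conditionally on $\{Z_u\}$, the variables $W(a_sZ_s)$ and $W(a_tZ_t)-W(a_sZ_s)$ must be genuine independent Gaussians, which is exactly where the hypothesis that $a_t$ be non-decreasing enters (it guarantees $a_tZ_t$ is a.s.\ non-decreasing in $t$, so the latter is a true forward Brownian increment). A secondary subtlety is the implicit non-degeneracy of $Z$: the argument requires $\V(Z_s)>0$ for $s>0$ and moments up to order four, which would fail for the trivial deterministic case $Z_t=t$ but is automatically satisfied by any additive subordinator with non-zero L\'evy measure, and in particular by the TSS used to represent the ATS.
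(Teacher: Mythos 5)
Your core computations are correct, but you take a genuinely different route from the paper, and one of your supporting claims is false in a way that narrows the scope of your argument. The paper's only-if proof works entirely at the level of characteristic functions: it writes the joint characteristic function $\E\left[e^{iu_1(f_t-f_s)+iu_2f_s}\right]$, conditions on the filtration of $\{Z_t\}$ (using, exactly as you do, that $a_tZ_t$ is a.s.\ non-decreasing so that $W(a_tZ_t)-W(a_sZ_s)$ is a genuine Brownian increment), factors out the $Z_t-Z_s$ part by independence of the subordinator's increments, and reduces the independence requirement to
\begin{equation*}
\E\left[e^{(i\hat{u}_1+i\hat{u}_2)Z_s}\right]=\E\left[e^{i\hat{u}_1Z_s}\right]\E\left[e^{i\hat{u}_2Z_s}\right],
\qquad i\hat{u}_1:=iu_1(b_t-b_s)-\tfrac{1}{2}u_1^2(a_t-a_s),
\end{equation*}
which fails for non-deterministic $Z_s$ whenever $a_t\neq a_s$ or $b_t\neq b_s$ (a non-degenerate random variable cannot be independent of itself). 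Your moment identities are precisely the second- and fourth-order projections of this factorization: I checked both displayed formulas, $\mathrm{Cov}(X_s,X_t-X_s)=b_s(b_t-b_s)\V(Z_s)$ and the collapse to $(a_t-a_s)\bigl[a_s\V(Z_s)+b^2\,\mathrm{Cov}(Z_s,Z_s^2)\bigr]$, and both are right, as are the association inequality $\mathrm{Cov}(Z_s,Z_s^2)\geq 0$ and the continuity argument upgrading $b_s(b_t-b_s)=0$ to $b_t$ constant (continuity forces $b$ into the two-point set $\{0,b_{s_0}\}$ and hence onto one value). The if direction is identical to the paper's, via \citet{mijatovic2010additive}. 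What the characteristic-function route buys, and your route gives up, is freedom from integrability: it works verbatim for every non-deterministic additive subordinator, with no moment hypotheses at all.

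The genuine gap is your closing claim that the required moments are ``automatically satisfied by any additive subordinator with non-zero L\'evy measure.'' This is false: an $\alpha$-stable subordinator, $0<\alpha<1$, is a (L\'evy, hence additive) subordinator with non-zero L\'evy measure and infinite mean, so already $\mathrm{Cov}(X_s,X_t-X_s)$ is undefined and your argument cannot start. Finiteness of moments of $Z_t$ is equivalent to an integrability condition on the tail of $\nu_t$ \citep[cf.][Th.25.3]{Sato}, which the exponential tempering gives you for the TSS of \textbf{Definition \ref{definition:ATSSub}} but which the theorem's hypotheses do not supply. As written, your proof therefore establishes \textbf{Theorem \ref{theorem: sub_diff}} only under an extra assumption such as $\E[Z_t^3]<\infty$ (third moments suffice, since the highest power of $Z_s$ in your fourth-order identity is cubic; in fact, by positivity of all terms in the conditional expansion, finiteness of second moments already forces the cubic term to be finite) --- enough for the paper's application to the ATS, but strictly weaker than the statement as given. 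A second, smaller point, which you flag yourself and which the paper's own proof shares: both arguments need $Z_s$ non-deterministic for each $s>0$ (your $\V(Z_s)>0$); if $Z$ were deterministic on an initial interval, $b_t$ could vary there without breaking additivity, so this non-degeneracy is an implicit hypothesis of the theorem itself rather than a defect peculiar to your proof.
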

	\begin{proof}
		First, let us prove the only if part. We define $f_t:=W\left(a_t{Z}_t\right)+b_t {Z}_t+c_t$. We show that, if it exits $t>s$ such that $a_t\neq a_s$ or $b_t\neq b_s$ then the increment $f_t-f_s$ is not independent from $f_s$; thus, $\{f_t\}$ is not an additive process. 
		We use the fact that two random variables $Y_1$ and $Y_2$ are independent if and only if, for any $u_1,\, u_2\in \mathbb{C}$\begin{equation}
			\mathbb{E}\left[e^{iu_1Y_1+iu_2Y_2}\right]=\mathbb{E}\left[e^{iu_1Y_1}\right]\mathbb{E}\left[e^{iu_2Y_2}\right]\;\;.\label{eq:independence}
		\end{equation}
		We need to show that it exists a couple $u_1,\, u_2\in \mathbb{C}$ such that \begin{equation}
			\mathbb{E}\left[e^{iu_1(f_t-f_s)+iu_2f_s}\right]\neq\mathbb{E}\left[e^{iu_1(f_t-f_s)}\right]\mathbb{E}\left[e^{iu_2f_s}\right]\;\;. \label{eq:independence_incr}
		\end{equation}
		The left hand side of \eqref{eq:independence_incr} \begin{align}
			\mathbb{E}\left[e^{iu_1(f_t-f_s)+iu_2f_s}\right]=\nonumber&\mathbb{E}\left[e^{iu_1(b_tZ_t-b_sZ_s)-\frac{1}{2}u_1^2(a_tZ_t-a_sZ_s)+iu_2b_sZ_s-\frac{1}{2}u_2^2(a_sZ_s)}\right]\\
			=&\nonumber\mathbb{E}\left[e^{iu_1b_t(Z_t-Z_s)-\frac{1}{2}u_1^2a_t(Z_t-Z_s)}\right]\\&\quad \mathbb{E}\left[e^{iu_1(b_t-b_s)Z_s-\frac{1}{2}u_1^2(a_t-a_s)Z_s+iu_2b_sZ_s-\frac{1}{2}u_2^2(a_sZ_s)}\right] \nonumber\\
			=&\mathbb{E}\left[e^{iu_1b_t(Z_t-Z_s)-\frac{1}{2}u_1^2a_t(Z_t-Z_s)}\right] \mathbb{E}\left[e^{i\hat{u}_1Z_s+i\hat{ u}_2Z_s}\right]\;\;,\label{eq:left}
		\end{align}
		where $i\hat{u}_1:=iu_1(b_t-b_s)-\frac{1}{2}u_1^2(a_t-a_s)$ and  $i\hat{u}_2:=iu_2b_s-\frac{1}{2}u_2^2(a_s)$.
		We obtain the first equality by tower property of the expected value by conditioning with respect to the filtration of $\{Z_t\}$ and because $a_sZ_s$ is non-decreasing. The second equality follows by the fact that the increment $Z_t-Z_s$ is independent from $Z_s$.
		
		The right hand side of \eqref{eq:independence_incr}
		
		\begin{align}
			\mathbb{E}\left[e^{iu_1(f_t-f_s)}\right]\mathbb{E}\left[e^{iu_2f_s}\right]&=\mathbb{E}\left[e^{iu_1(b_tZ_t-b_sZ_s)-\frac{1}{2}u_1^2(a_tZ_t-a_sZ_s)}\right]\mathbb{E}\left[e^{i\hat{u}_2Z_s}\right]\nonumber \\&=\mathbb{E}\left[e^{iu_1b_t(Z_t-Z_s)-\frac{1}{2}u_1^2a_t(Z_t-Z_s)}\right] \mathbb{E}\left[e^{i\hat{u}_1Z_s}\right]\mathbb{E}\left[e^{i\hat{u}_2Z_s}\right]\label{eq:right}\;\;.
		\end{align}
	As above, the second equality follows by the fact that the increment $Z_t-Z_s$ is independent from $Z_s$.
	
	We show that \eqref{eq:independence_incr} is true by contradiction. Let us assume that \eqref{eq:independence_incr} holds with the equality:
	\[
		\mathbb{E}\left[e^{iu_1(f_t-f_s)+iu_2f_s}\right]=\mathbb{E}\left[e^{iu_1(f_t-f_s)}\right]\mathbb{E}\left[e^{iu_2f_s}\right]\;\;. 
	\]
		By substituting \eqref{eq:left} and \eqref{eq:right} we obtain
		\begin{equation}
			\mathbb{E}\left[e^{(i\hat{u}_1+i\hat{ u}_2)Z_s}\right]=\mathbb{E}\left[e^{i\hat{u}_1Z_s}\right]\mathbb{E}\left[e^{i\hat{u}_2Z_s}\right]\;\;,
		\end{equation}
	where the process $\{Z_t\}$ appears on both sides of the equality.
		The last equation is absurd for any couple of $\hat{u}_1$ and $\hat{u}_2$  because $\{Z_t\}$ is not deterministic. This proves the only if part.
		
		We prove the if part using Proposition 1 of \citet{mijatovic2010additive}. If $a_t={a}$ and $b_t={b}$ are  constant in time $W({a}t)+{b}t$ is a L\'evy process with diffusion term ${a}$ and drift term ${b}$. Then, $\{W({a}Z_t)+{b}Z_t\}$ is an additive process
	\end{proof}
	
	In this section, we have proven two main results regarding additive subordination and its limitations. First, in \textbf{Theorem \ref{theorem:no_subordination}}, we establish that an additive process is not always identical in law to a subordinated BM whatever the choice of the subordinator.  Second, in \textbf{Theorem \ref{theorem: sub_diff}},  we demonstrate that  also considering a natural extension of additive subordination we cannot construct a process identical in law to an additive process.
	These results have been proven considering as a counterexample the class of ATS processes with non-constant $\eta_t$. In the next section, we provide statistical evidence, on a large dataset of options, that only a non-constant $\eta_t$ is  consistent with market prices. In light of these findings, we can tentatively answer  the question posed in the title. For a large class of additive processes, subordination (i.e. the case with constant $\eta_t$) is irrelevant for applications in the derivative market.

	\section{Dataset \& Numerical experiments} \label{section:num}
	In this section, we present the dataset, and propose a statistical test to verify whether the ATS with constant $\eta_t$ can calibrate the equity derivatives' implied volatility surface. In the previous section, we have proven that this is the unique subcase of the ATS which can be constructed through additive subordination. We find, in all business days in a six-month time horizon,  strong statistical evidence to reject the null hypothesis  of a constant $\eta_t$;  hence,  additive subordination cannot  describe adequately the equity market prices.
	
	\smallskip
	We consider all  closing prices for  S\&P 500  and EURO STOXX 50 options\footnote{ We consider the CBOE European options on the  S\&P 500   and EUREX European options for the EURO STOXX 50. The Eikon Reuters option chains are  \textit{0\#SPX*.U} for the S\&P 500 and \textit{0\#STXE*.EX} for the EURO STOXX 50.} 
	each business day in the first semester of 2019 from the $1^{st}$ of January 2019 to the $30^{th}$ of June 2019. For every business day, we observe option prices not only on several strikes but also on several expiries:
	the third Friday of the first six months after the considered business day and then on  
	March, June, September, and December in the front year and June and December in the next year. For the EURO STOXX 50, options expiring in December of the following years are also available. Financial data are provided by Eikon Reuters. 
	We  obtain  interest rates and future prices following the methodology introduced in \citet{azzone2021funding}. Moreover, we consider options within   the (10\%, 90\%) Black and Scholes delta interval.
	
	\smallskip
	
	A brief description of the numerical methodology follows. We consider the ATS NIG ($\alpha=1/2$) and the ATS VG ($\alpha=0$). Similar results hold for any choice of $\alpha \in [0,1)$. The methodology is divided into two steps.
	
	First, for every business day, the ATS (NIG or VG) parameters are calibrated following the same procedure described in \citet[][Ch.14, pp.464-465]{Cont}: we consider the options maturity by maturity and obtain, for every maturity $T$, the three time-dependent parameters $k_T$, $\sigma_T$ and $\eta_T$.  
	
	Second, we define $\theta:=T\sigma_T^2$ and the ATS process  $\hat{ f}_{\theta}$ with parameters $\hat{k}_\theta:=k_T \sigma_T^2$, $\hat{\eta}_\theta:=\eta_T$ and $\hat{\sigma}_\theta:=1$. This new process, which is defined with respect to the new time $\theta$, has the same characteristic function of the calibrated ATS for every maturity.  Then, testing the null hypothesis that $\hat{\eta}_\theta$ is constant with respect to time $\theta$ is equivalent to testing that ${\eta}_T$ is constant with respect to the original time $T$. When considering a power scaling behavior of $\hat{\eta}_\theta$ 
	\begin{equation}
		\hat{\eta}_\theta =\;{\eta}\theta^\delta \;\;,
		\label{eq:scaling}
	\end{equation}
	where ${\eta}$ is positive  and  $\delta$ is real. In all days analyzed, this self-similar 	$\hat{\eta}_\theta$ appears to describe accurately market data for all available maturities --from a few days up to some years-- for both markets and both the NIG and VG ATS. 
	In figure 	\ref{figure:eta_scaling}, we plot a representative  example. We consider the VG ATS calibrated values of 	$\hat{\eta}_\theta$ in log-log scale, for the S\&P 500  on the $21^{st}$ of March 2019. We also report confidence intervals and a 	weighted regression line. 
	The methodology is described in detail by \citet{azzone2019additive}, appendix B. We estimate  $\delta$ as the slope coefficient of the regression and we  test whether it is non-zero. Notice that, if $\delta\neq 0$, $\hat{\eta}_\theta$ is non-constant. 
	\begin{center}
		\begin{minipage}[t]{1\textwidth}
			\centering
			\includegraphics[width=0.9\textwidth]{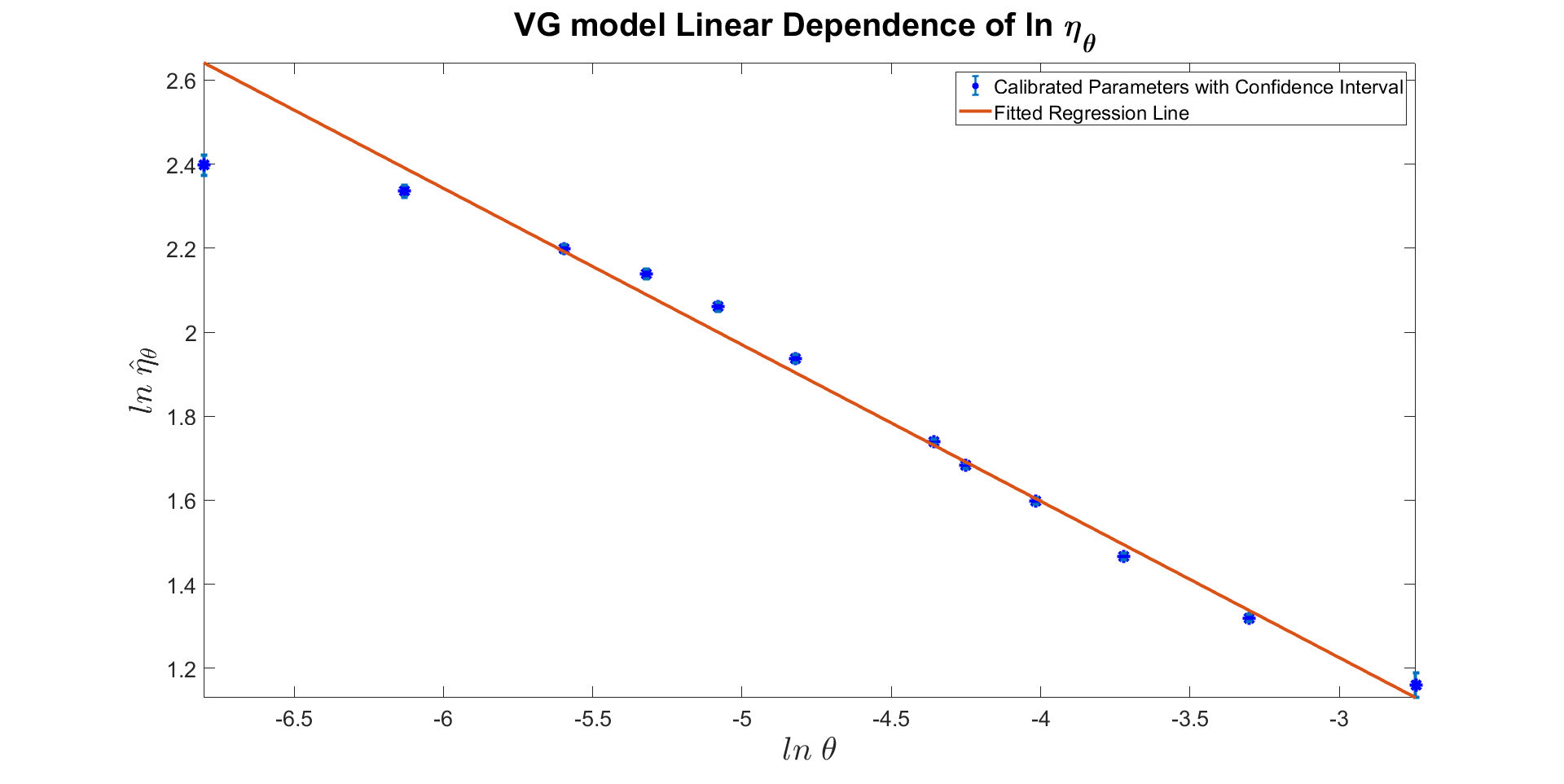} 
			\captionof{figure}{\small 
				Weighted regression line and  observed  $\ln {\hat \eta}_\theta$ with respect to $\ln\theta$ for the VG calibrated model on the $21^{st}$ of March 2019 for the S\&P 500 index. We plot confidence intervals equal to two times the corresponding standard deviations. In the analysis, we consider all available times to maturity that span from a few days to almost three years. Similar results hold for both indexes in all business days of the observed period.}\label{figure:eta_scaling}
		\end{minipage}
	\end{center}
	
	We test whether  $\delta$ is non-zero for the NIG and the VG ATS in the six-month dataset described at the beginning of this section: we calibrate the ATS on option prices for each day of the dataset and repeat the statistical test on $\delta$ each day.  The mean and maximum p-values of the statistical tests are reported in table \ref{tab: p_values}. 
	
	In all cases, we reject the null hypothesis of  $\delta=0$ with a 1$\tcperthousand$ threshold. Hence, for both indexes and both models (ATS VG and NIG), and in all considered days of the semester, there is statistical evidence that $\hat{\eta}_\theta$ is non-constant. 
	\begin{center}
		\begin{tabular} {|cc|cc|}
			\toprule
			Model	&Index	&Mean&Maximum\\
			\hline
			NIG&S\&P 500& $5*10^{-6}$ &$5*10^{-4}$\\
			\hline
			VG&S\&P 500&$10^{-152}$ &$10^{-150}$\\
			\hline
			NIG&EURO STOXX 50& $2*10^{-19}$ &$2*10^{-17}$\\
			\hline
			VG&EURO STOXX 50&$3*10^{-6}$ &$4*10^{-4}$\\
			\bottomrule
		\end{tabular}		
		\captionof{table}{\small Mean and maximum 
			values of the statistical tests. We refuse the null hypothesis of $\delta=0$ for the NIG and the VG model for every business day from  the $1^{st}$ of January 2019 to the $30^{th}$ of June 2019.
		}		\label{tab: p_values}
	\end{center}
	
	From the results in the table, we reject the null hypothesis with a 1$\tcperthousand$
	 threshold in all cases. Let us also underline that all p-values, in all days and for both models and indexes, are  below $10^{-4}$ with two exceptions. \\In appendix \ref{app:calib}, we provide  additional calibration result that shows why the ATS with constant $\eta_t$ is not consistent with market data. We compare the calibration performances of an ATS and an ATS with constant $\eta_t$. While for long maturity both reproduce accurately the smile, for shorter maturities the latter deviates considerably. This fact is in line with the theoretical results on the short time implied volatility skew: the ATS reproduces the power scaling skew if and only if $\eta_t$ is non constant.
	
	\bigskip
	
	%
	\section{Conclusions} \label{sec:conc}
In financial applications, the independence between the BM and the time change is a standard assumption.
This paper presents three interesting results on independent time changes of Brownian motions.

\smallskip
First, our pivotal observation is that a local semimartingale, in general, is not equivalent to a BM with a time-change \textit{independent} from the BM. In \textbf{Theorem \ref{theore_if_only}}, we have utilized  the additive normal tempered stable process (ATS) --a class of processes that calibrates accurately equity option markets-- as an example of additive process that cannot be represented with an independent time-change. This class of processes presents a deterministic function of time $\eta_t$ that plays a key role in reproducing accurately the equity volatility surface.
	Specifically, we have established the nonexistence of any pair comprising  a time change and an independent BM, where the resulting time-changed BM is equal in law to an ATS with a non-constant ${\eta}_t$.
	Hence, a representation theorem equivalent to the one of	\citet{monroe1978processes}  does not hold when we assume independent subordination.  
		
	\smallskip
		Second, we have established  in \textbf{Theorem \ref{theorem: sub_diff}} that an additive process  cannot even be expressed as the natural generalization  of additive subordination:  a process in the form $\{W(a_t  Z_t) + b_t  Z_t  + c_t\}$, where $a_t$ and $b_t$ are non-constant functions of time and $Z_t$ is an additive subordinator. Once again, we have utilized the ATS as a counterexample to prove the theorem. 
 Furthermore, we prove that only a specific subcase of the ATS, characterized by a constant  $\eta_t$, can be obtained through independent (additive) subordination (\textbf{Proposition \ref{theore_if_only}}). 
 
 \smallskip
Thus, a natural question arises for a practitioner's perspective. Can the aforementioned subcase of the ATS adequately replicate market data? We devise a statistical test to address this question. We have provided strong statistical evidence that the ATS  with a constant  parameter $\eta_t$ is inconsistent with the two most liquid equity derivative markets; we have considered all closing prices for S\&P 500 and EURO STOXX 50 options each business day in the first semester of 2019. We report the results  in table  \ref{tab: p_values}: for every day within this period, we reject the null hypothesis of a constant $\eta_t$ with a 1$\tcperthousand$ threshold.
Thus, independent subordination does not adequately replicate market data.

			\clearpage
	\bibliography{sources}
	\bibliographystyle{tandfx}

\section*{Acknowledgments}
		We thank G. Callegaro, G. Guatteri, E. Scalas, L. Torricelli  for their useful comments and P. Carr for enlightening discussions on the topic.
\clearpage
\appendix
\section{Calibration Results} \label{app:calib}
In this appendix, we report some additional calibration results to highlight  that an ATS with constant $\eta_t$ lacks the capacity to calibrate adequately market data; this is the unique subcase of ATS that can be represented via additive subordination (see \textbf{Proposition \ref{theore_if_only}}). In the following, we utilize the dataset described in section \ref{section:num}: for every day in a six-month time horizon,	 all  closing prices of  S\&P 500  and EURO STOXX 50 options within the range (10\%, 90\%) of Black and Scholes delta.
\smallskip

We calibrate the ATS and the ATS with constant $\eta_t$ on the same implied volatility surface and compare the results. 
In figure \ref{figure:SP_IV}, we report the volatility smile reproduced by the ATS NIG  on the $21^{st}$ of March 2019 on the S\&P 500 implied volatility surface.  The calibrated ATS implied volatility (blue dots) closely aligns with the market smile (green diamonds) both on the one month (on the left) and the one year (on the right) maturity. In contrast, the ATS with constant $\eta_t$ (red triangles) conspicuously fails to capture both the level and the skew of the short time implied volatility. In figure \ref{figure:EU_IV}, we present analogous findings for the EURO STOXX 50 volatility surface when calibrating the ATS VG. Once again, it is evident that the ATS with constant $\eta_t$ fails to replicate market data. 
\begin{center}
	\begin{minipage}[t]{1\textwidth}
		\centering
		\includegraphics[width=0.9\textwidth]{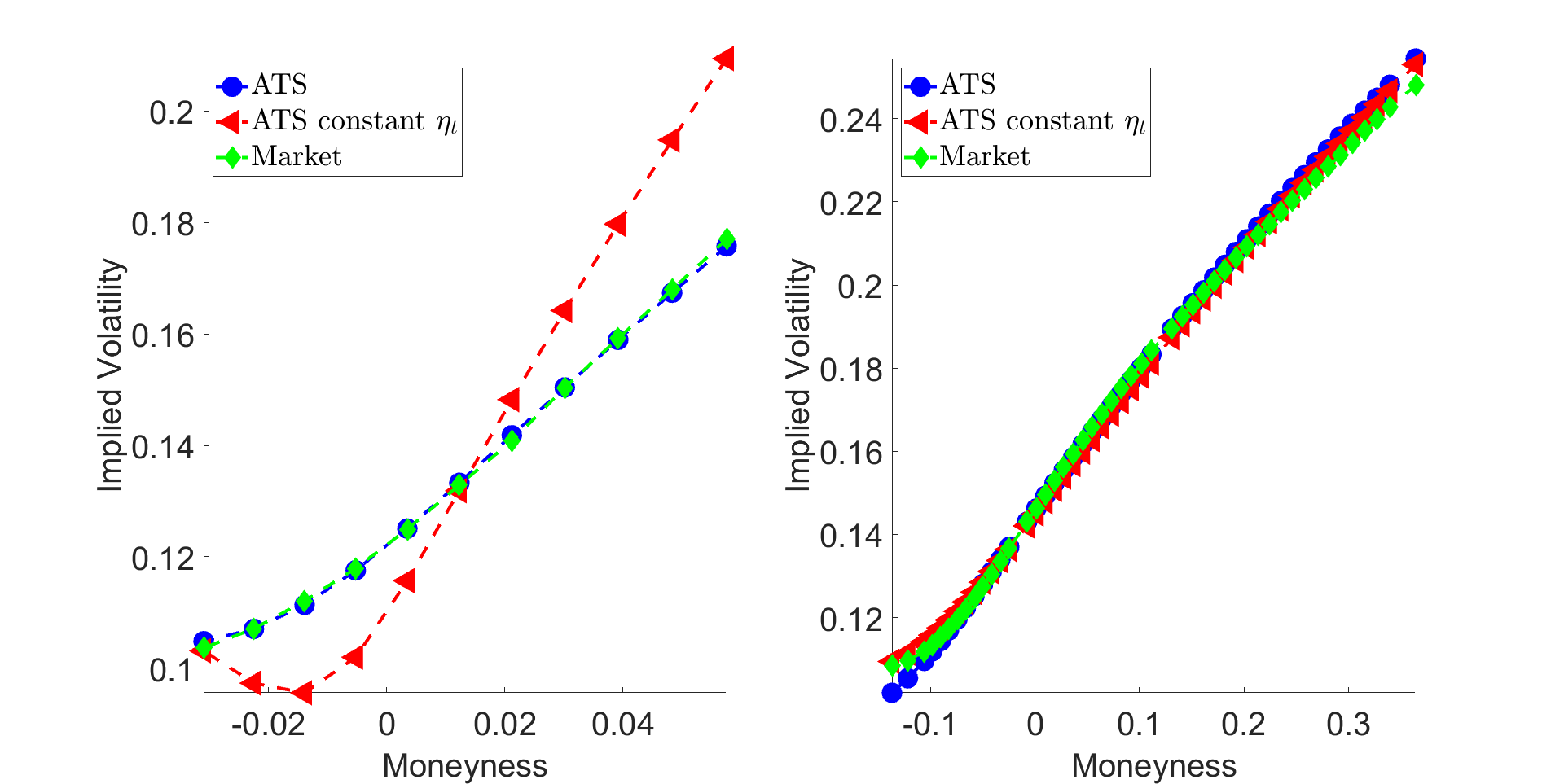} 
		\captionof{figure}{\small ATS NIG (blue dots), ATS NIG with constant $\eta_t$ (red triangles) and market (green diamond) implied volatility on the S\&P 500 surface on the $21^{st}$ of March 2019. The calibrated ATS fits very well the market implied volatility both on the one month (on the left) and the one year (on the right) maturity  while the ATS with constant $\eta_t$ is unable to match the short time smile.
		}\label{figure:SP_IV}
	\end{minipage}
\end{center}
\begin{center}
	\begin{minipage}[t]{1\textwidth}
		\centering
		\includegraphics[width=0.9\textwidth]{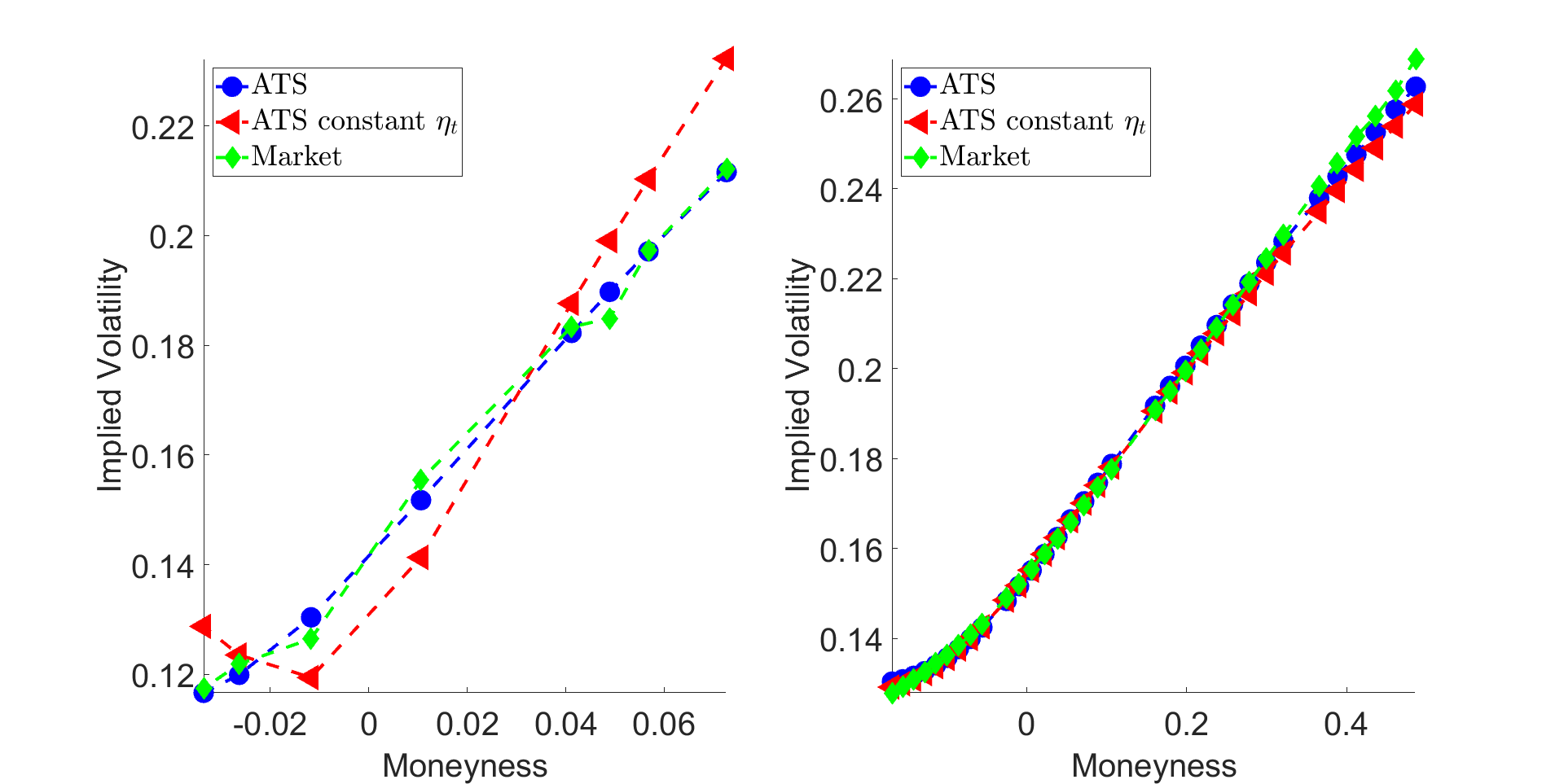} 
		\captionof{figure}{\small ATS VG (blue dots), ATS VG with constant $\eta_t$ (red triangles) and market (green diamond) implied volatility on the EURO STOXX 50 surface on the $21^{st}$ of March 2019. The  ATS fits very well the market implied volatility both on the 1 month (on the left) and the one year (on the right) maturity.
	 As in the case of the S\&P 500 the ATS with constant $\eta_t$ is unable to match the short time smile.	}\label{figure:EU_IV}
	\end{minipage}
\end{center}
In figure \ref{figure:MSE}, we report the ATS NIG (blue dots) and ATS NIG with constant $\eta_t$ (red triangles) mean squared error (MSE) with respect to different times to maturity in log-scale. While for long maturities the MSE are similar,  for short maturities the ATS with constant $\eta_t$ has a MSE more than two orders of magnitude above the ATS.

\begin{center}
	\begin{minipage}[t]{1\textwidth}
		\centering
		\includegraphics[width=0.9\textwidth]{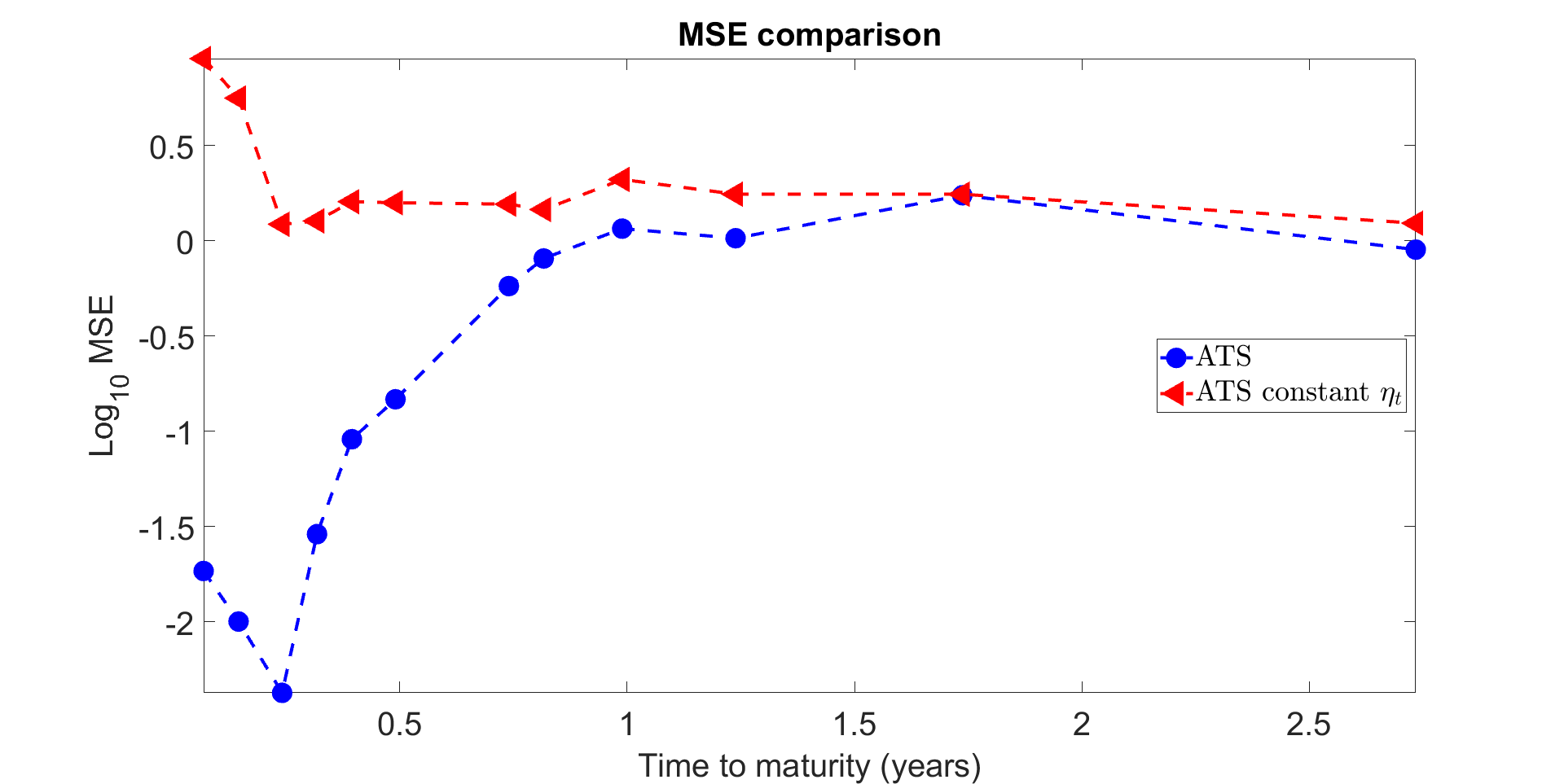} 
		\captionof{figure}{\small ATS NIG (blue dots) and ATS NIG with constant $\eta_t$ (red triangles) mean squared error (MSE) with respect to different times to maturity in log-scale on the $21^{st}$ of March 2019. Let us notice that, while for long maturities the MSE are similar,  for short maturities the ATS with constant $\eta_t$ has a MSE more than two orders of magnitude above the ATS.	}\label{figure:MSE}
	\end{minipage}
\end{center}
These results are in line with the discussion in section \ref{sec:main}: the theoretical skew of the ATS matches the short time power scaling behavior observed in the market if and only $\eta_t$ is not constant. We can understand from figures \ref{figure:SP_IV}-\ref{figure:MSE}  how much this constraint on $\eta_t$ affects the capability of the ATS to match the market implied volatility for short maturities.
\smallskip

In figure \ref{figure:Eta_Historic_EU}, we report the calibrated $\delta$ for the ATS NIG (on the left) and the ATS VG (on the right) for every day in the semester for the EURO STOXX 50 surface. We recall that $\delta=0$ is equivalent to a constant $\eta_t$ (see equation (\ref{eq:scaling})). In both cases, $\delta$ is far from zero. 
	\begin{center}
	\begin{minipage}[t]{1\textwidth}
		\centering
		\includegraphics[width=0.9\textwidth]{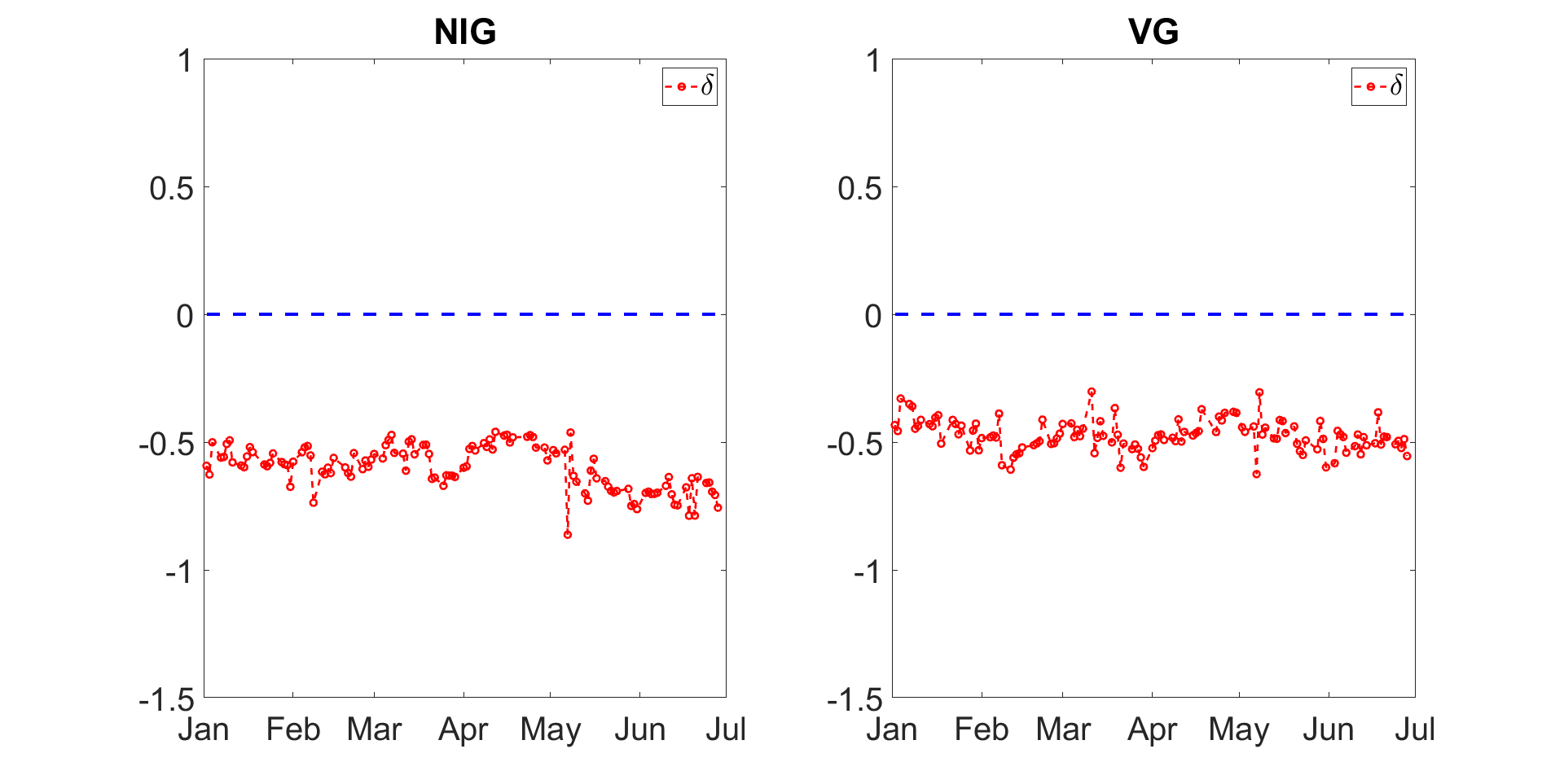} 
		\captionof{figure}{\small  Estimated $\delta$ for the ATS NIG (on the left) and the  VG (on the right) for all business days of the first semester of 2019 for the EURO STOXX 50 index. We observe that in all business days and for both models the estimated $\delta$ (red dot)  is far from zero (the dashed blue line). 
		}\label{figure:Eta_Historic_EU}
	\end{minipage}
\end{center}
In section \ref{section:num}, we have performed a statistical test with null hypothesis $\delta=0$ for all business days of the semester. In figure \ref{figure:p_values}, we report the p-values of the statistical tests for the S\&P 500 index.  Let us notice that all p-values are  below $0.1\tcperthousand$ with two exceptions in the NIG case and no exceptions in the VG case.
	\begin{center}
	\begin{minipage}[t]{1\textwidth}
		\centering
		\includegraphics[width=0.9\textwidth]{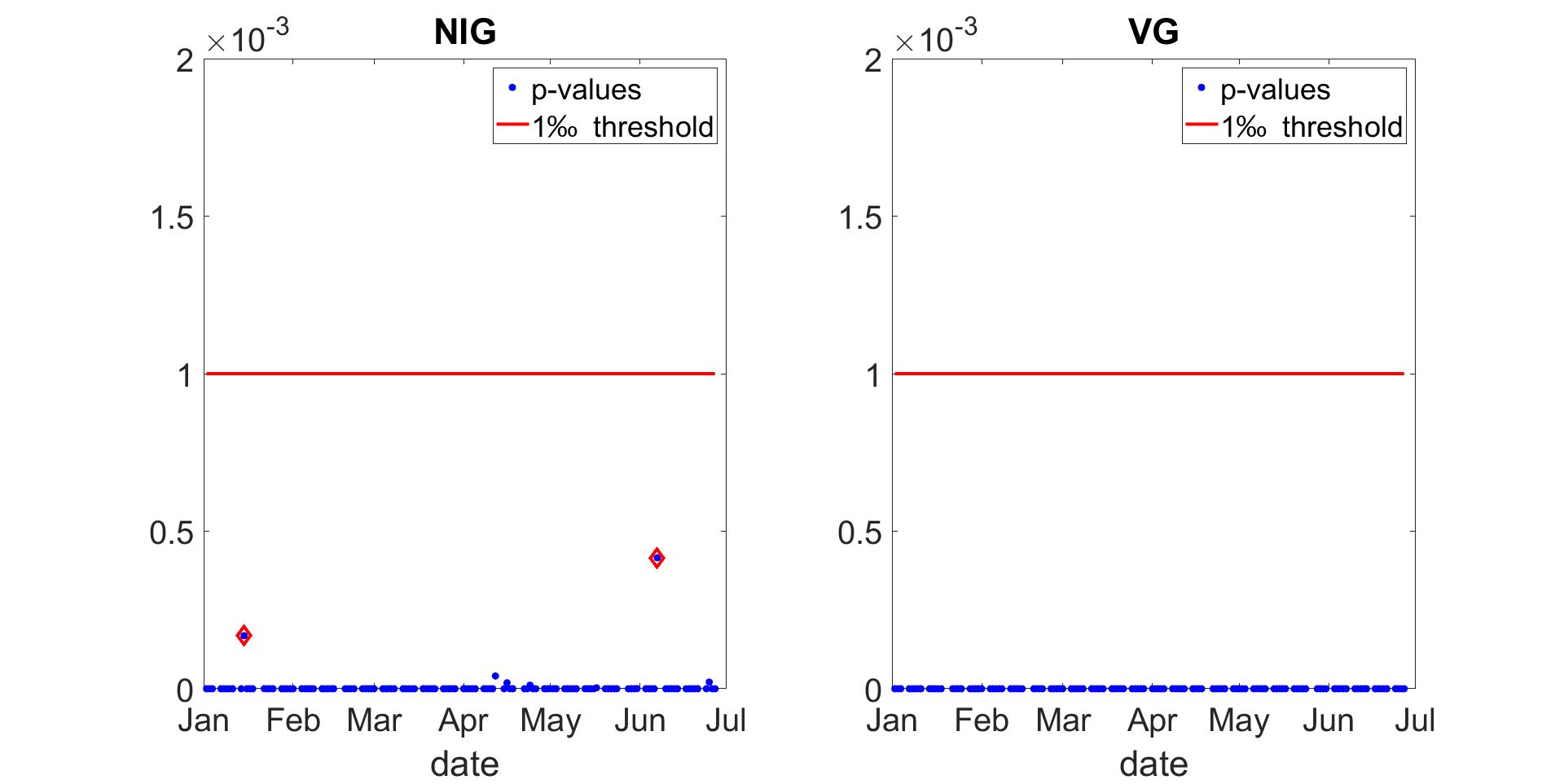} 
		\captionof{figure}{\small  	P-values (blue dots) of the statistical tests with null hypothesis $\delta=0$ for the ATS NIG (on the left) and the  VG (on the right) for all business days of the semester for the S\&P 500 index. The continuous red line is the 1$\tcperthousand$  threshold of the test. In all business days, and for both models, we reject the null hypothesis with a 1$\tcperthousand$ threshold. Let us notice that all p-values are  below 0.1$\tcperthousand$ with two exceptions (identified with the red diamonds) in the NIG case and no exceptions in the VG case.
		}\label{figure:p_values}
	\end{minipage}
\end{center}
\end{document}